\renewcommand\footnotetextcopyrightpermission[1]{} 
\newcommand{\cmark}{\ding{51}}%
\newcommand{\xmark}{\ding{55}}%
\newcommand\important[1]{\todo[inline]{\textbf{Important:} #1}}
\newcommand\alberto[1]{\todo[color=yellow,inline]{\textbf{Alberto:} #1}}
\newcommand\etienne[1]{\todo[color=orange,inline]{\textbf{Etienne:} #1}}
\newcommand\pierre[1]{\todo[color=brown,inline]{\textbf{Pierre:} #1}}
\newcommand\mustafa[1]{\todo[color=blue!40,inline]{\textbf{Mustafa:} #1}}
\newcommand\michal[1]{\todo[color=green,inline]{\textbf{Michał:} #1}}
\newcommand\onur[1]{\todo[color=red,inline]{\textbf{Onur:} #1}}
\renewcommand\important[1]{}
\renewcommand\alberto[1]{}
\renewcommand\mustafa[1]{}
\renewcommand\michal[1]{}
\renewcommand\onur[1]{}
\renewcommand\etienne[1]{}
\renewcommand\pierre[1]{}
\newcommand{\mynote}[3]{
    \protect\fbox{\bfseries\sffamily\scriptsize#1}
    {\small$\blacktriangleright$\textsf{\emph{\color{#3}{#2}}}$\blacktriangleleft$}}}
\newcommand{\mynote}[3]{}}
\newcommand\sysname{Shard Scheduler\xspace}
\def\BState{\State\hskip-\ALG@thistlm}
\newcommand{\etal}{\textit{et al.}\@\xspace}
\newcommand{\eg}{\textit{e.g.}\@\xspace}
\newcommand{\ie}{\textit{i.e.}\@\xspace}
\newcommand\para[1]{\vspace{0.05in} \noindent \textbf{#1.}}
\def\first{({\it i})\xspace}
\def\second{({\it ii})\xspace}
\definecolor{verylightgray}{gray}{0.8}
\newcolumntype{L}{l<{\hspace{1cm}}}
\newcolumntype{C}{c<{\hspace{1cm}}}
\newcolumntype{D}{c<{\hspace{0.3cm}}}
\begin{document}

\title[\sysname]{\sysname: object placement and migration in sharded account-based blockchains}

\author{Michał Król}
\affiliation{City, University of London}
\email{michal.krol@city.ac.uk} 

\author{Onur Ascigil}
\affiliation{University College London}
\email{o.ascigil@ucl.ac.uk} 

\author{Sergi Rene}
\affiliation{University College London}
\email{s.rene@ucl.ac.uk} 

\author{Alberto Sonnino}
\affiliation{Facebook Novi}
\email{asonnino@fb.com} 

\author{Mustafa Al-Bassam}
\affiliation{LazyLedger}
\email{mustafa@lazyledger.io} 

\author{Etienne Rivière}
\affiliation{UCLouvain}
\email{etienne.riviere@uclouvain.be} 

\renewcommand{\shortauthors}{M. Król, O. Ascigil, S. Rene, A. Sonnino, M. Al-Bassam, and E. Rivière}

\begin{abstract}
We propose \sysname, a system for object placement and migration in account-based sharded blockchains. Our system calculates optimal placement and decides of object migrations across shards and supports complex multi-account transactions caused by smart contracts. Placement and migration decisions made by \sysname are fully deterministic, verifiable, and can be made part of the consensus protocol. \sysname reduces the number of costly cross-shard transactions, ensures balanced load distribution and maximizes the number of processed transactions for the blockchain as a whole. It leverages a novel incentive model motivating miners to maximize the global throughput of the entire blockchain rather than the throughput of a specific shard.
\sysname reduces the number of costly cross-shard transactions by half in our simulations, ensuring equal load and increasing the throughput 3 fold when using 60 shards. We also implement and evaluate \sysname on Chainspace, more than doubling its throughput and reducing user-perceived latency by 70\% when using 10 shards.
\end{abstract}

\maketitle


\section{Introduction}
\label{sec:introduction}

Sharding emerged as one of the most promising layer-1 solutions to the scalability problems of blockchains~\cite{rapidchain, kokoris2018omniledger, luu2016secure, chainspace, wang2019monoxide, ethereum2}. A sharded system divides the blockchain infrastructure into groups called shards. Each shard has its own miners, holds a subset of the state, and processes a subset of transactions. This technique has the potential to increase the number of processed transactions per second, as they can be verified and agreed on in parallel by independent groups of miners. In theory, by increasing the number of shards, we can increase the global throughput of the blockchain.

A sharded blockchain~\cite{wang2019sok} can be seen as a distributed database where each transaction performs write operations, creating, destroying or modifying objects in one or multiple partitions (shards). We can distinguish between transactions writing to only one shard (intra-shard transactions) or to multiple shards (cross-shard transactions). Intra-shard transactions are relatively cheap and can be agreed on using the consensus protocol within their shard. In contrast, cross-shard transactions are more costly as they require local consensus in all involved shards as well as a cross-shard agreement between these shards. This is achieved using expensive techniques such as 2-phase commit~\cite{sonnino2019replay, chainspace, kokoris2018omniledger} or mutex-based protocols~\cite{rapidchain, ethereum2}. Finally, cross-shard transactions must be included in the chains of all shards holding involved accounts resulting in state inflation. The placement of objects in shards plays a crucial role in determining the overall performance (\ie the Transaction per Second--TPS--rate and the user-perceived confirmation latency).

In this paper, we focus on the account-based data model. Account-based objects are persistent. They represent user accounts (\ie user balance) or smart contracts and can be modified multiple times. Placing an object in a shard in the account-based model influences all future transactions for this object (in contrast to single-use transaction outputs in the UTXO model). 
Ethereum, the largest blockchain system supporting smart contracts, is an example of an account-based blockchain transitioning into a sharded mode of operation~\cite{ethereum2}.


Existing sharded blockchain designs generally use a static hash-based object-to-shard assignment~\cite{rapidchain, kokoris2018omniledger, luu2016secure, chainspace, wang2019monoxide, ethereum2}.
The hash space of object identifiers is divided equally between shards, and hashing the identifier of an object allows clients and miners to deterministically determine its location without using additional indexing services. In the long run, hash-based allocation equally spreads the load across shards but causes loss of data locality. Frequently interacting accounts may be spread across multiple shards causing costly cross-shard interactions~\cite{sok-consensus}. Furthermore, a fixed assignment cannot always react to activity bursts of accounts located in a single shard, causing short-term load imbalance. Both problems become more pronounced with an increasing number of shards and with an increasing number of accounts involved in each transaction, \eg as the result of the smart contracts executions.

\begin{figure}[t]
\centering
\includegraphics[scale=0.7]{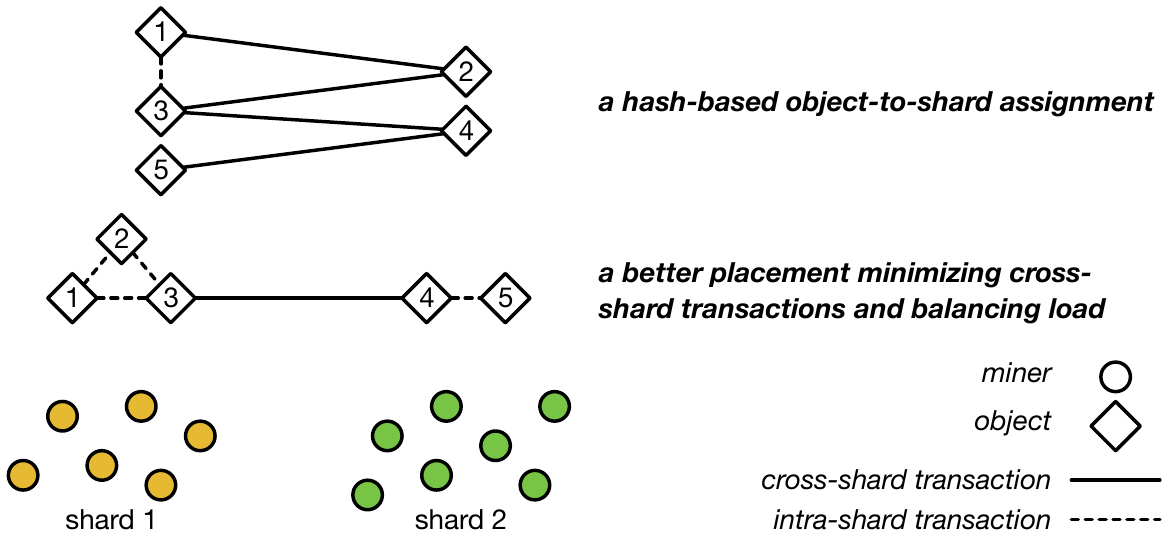}
\vspace{-5pt}
\caption{Object-to-shard assignment: a static placement (\eg hash-based) results in a high number of cross-shard transactions. A better placement could place (or migrate) objects 1, 2 and 3 in (to) shard~1 and objects 4 and 5 in (to) shard~2. 
}
\vspace{-15pt}
\label{fig:introduction}
\end{figure}

\Cref{fig:introduction} presents a simplified view of a blockchain with two shards and five accounts. Edges represent interactions (transactions) between accounts. The upper hash-based placement results in a high number of cross-shard transactions. 
A better placement is a compromise between load-balancing and the number of cross-shard transactions. We note that achieving such a placement through initial placement decisions only is not necessarily possible, and may require \emph{migrating} objects between shards (\eg accounts 2 and 5 in our example). Migration operations~\cite{fynn2018challenges, fynn2020smart, nguyen2019optchain} require additional transactions. The individual cost of these transaction executions, as well as the overhead they impose on the blockchain as a whole, must be worth paying, \ie result in higher throughput and lower confirmation latency for future transactions.


\para{Contributions} 
We present \sysname, a novel approach for deciding and enforcing object placement and migration decisions in sharded, account-based blockchains. Our scheduler balances the load between shards and improves data locality.
It leverages the possibility to initiate account migrations when necessary and seeks to maximize the global throughput of the blockchain.
At the same time, \sysname remains simple, deterministic, and verifiable for all the miners in the network to prevent abuse. \sysname is executed by the miners and does not require modifications of the clients, who are nonetheless able to verify the legitimacy of migration decisions taken as part of their transaction execution. Finally, \sysname makes scheduling decisions worth enacting for rationale miners through economic incentives.
We do not seek to propose novel mechanisms for handling cross-shard transactions and account migrations, but rather build upon the different proposals by other authors~\cite{rapidchain, kokoris2018omniledger, luu2016secure, chainspace, wang2019monoxide, ethereum2}. We only make minimal and common assumptions on the capabilities of the underlying sharded blockchain, allowing \sysname to be implemented on top of a vast range of account-based blockchains, from the upcoming evolution of Ethereum~\cite{ethereum2}, to current systems such as Zilliqa~\cite{team2017zilliqa}.

\para{Outline} 
We present a background on account-based blockchains and sharding mechanisms in \Cref{sec:background}.
We outline the design and perimeter of use of \sysname in \Cref{sec:overview}, present our assumptions on the underlying sharded blockchain together with our design goals in \Cref{sec:assumptions}, and present our system model in \Cref{sec:model}.
We then present our contributions as follows.

Our first contribution, presented in \Cref{sec:observations}, is an analysis of the transaction history from Ethereum from a perspective of a sharded execution. We use the Ethereum Virtual Machine (EVM) to extract all accounts that were modified by every transaction. We then investigate the activity of the accounts, their data locality, and the load balancing when using a static hash-based assignment.

In \Cref{sec:design} we present the design of \sysname, a transaction scheduler for sharded, account-based blockchains. \sysname observes system load and interactions between accounts to place and migrate accounts across shards to maximize the throughput.

In \Cref{sec:economics}, we develop and discuss an incentive scheme for sharded blockchains that motivates miners to maximize the TPS of the blockchain as a whole. By deploying this scheme, we free blockchain end-users from costly, manual migrations of the state and avoid associated security problems. Furthermore, we incentivize miners to perform migrations providing the highest global TPS instead of focusing on the fees collected on their own shard.

In \Cref{sec:evaluation}, we quantify the performance gain over a hash-based approach using a  simulator.

In \Cref{sec:prototype}, we present the integration of \sysname with the Chainspace~\cite{chainspace} sharded blockchain system and the results of its deployment on a large-scale testbed. Our evaluation shows that \sysname can adapt to many potential configurations of a sharded environment, more than doubles the throughput of the system, and lowers the latency by 65\% for 60 shards. 

Finally, \Cref{sec:conclusion} presents an analysis of \sysname properties, discusses future work, and concludes the paper.


\section{Background}
\label{sec:background}

In this section, we present background on account-based blockchains. We then discuss their transition into a sharded mode of operation, cross-shard transactions and migrations. 

\subsection{Accounts, state and transactions}

A blockchain is an append-only ledger maintained by a number of nodes called miners. A blockchain is expanded by the addition of blocks by designated miners, who receive incentives for extending the chain with correct blocks and behave according to the protocol. A block consists of a block header together with a list of transactions. Transactions modify the state of the ledger ranging from simple coin transfers to invocation of sophisticated smart contracts. The block header contains a hash of the block, the hash of the previous block, the hash of the state snapshot at a given time, and additional information related to the consensus protocol. Each block has a fixed capacity limiting the number of transactions it can contain. Including a transaction in a block requires some of the available total capacity of the blockchain system. We refer to the capacity required by a transaction as the \emph{cost} of that transaction. The cost usually depends on the size of the transaction (as done in Bitcoin~\cite{nakamoto2019bitcoin}) or its complexity (as done in Ethereum~\cite{wood2014ethereum}).

Miners that store all the blocks (including all the transactions) are called full nodes. In contrast, light nodes store only block headers and reactively pull required state elements or transactions from full nodes when needed. Light nodes can verify the integrity of the received data by comparing its hash against the value in the corresponding block header (\ie using Merkle proofs~\cite{merkle1987digital}).

In the account-based data model, the state of a blockchain consists of a list of objects representing accounts and their respective states. An account is accessed by its identifier (\eg a hash of its owner's public key) and represents an externally owned account (EOA), or a contract account (CA). For EOAs, the state consists of their balance. For CAs, the state may include more complicated data structures related to the logic of a smart contract. Importantly, while the state of EOAs is small and does not grow in time, the state of CAs can inflate as more data is put in the storage.

The state of an account can be modified by two types of transactions: external and internal. A transaction is external if sent from an EOA. For instance, a coin transfer, a contract creation, and a contract invocation are the 3 main external operation types happening in Ethereum~\cite{chen2020understanding}. Alternatively, a transaction is internal if it results from executing a smart contract invoked by an external transaction. 
A single external transaction may lead to multiple internal transactions depending on the smart contract logic.

\begin{algorithm}[t!]
\begin{algorithmic}[1]
\Procedure{payAll}{}
\State $\textit{users} \gets \text{a list of users to be paid}$
\State $\textit{amount} \gets \text{amount to pay each account}$
\For{\textit{user} \textbf{in} \textit{users}}
\If {$\textit{user.balance} < 10$}
\State $\textit{user.transfer(amount)}$
\EndIf

\EndFor
\EndProcedure
\end{algorithmic}
\caption{Example of a smart contract function modifying the state of multiple accounts.
}\label{alg:contract}
\end{algorithm}

A regular account-based transaction (\ie a simple coin transfer) modifies the state of up to 2 EOAs (the balance of the sender and that of the receiver). With the addition of Smart Contracts, transactions can lead to the modification of multiple accounts. \Cref{alg:contract} presents a Smart Contract implementing a \emph{payAll()} function. Calling this function modifies the state of the caller (to pay the transaction fees), the smart contract (to decrease its balance), and all the accounts stored in the \emph{users} map (to increase their balance), provided they currently have less than 10 coins. Smart contracts can also interact with and modify the state of other contracts by invoking their functions. Processing smart contract transactions require the write and read sets to be known to the consensus protocol layer based on the current state of the blockchain. 

\subsection{Sharding}

In fully sharded environments\footnote{Fully sharded environments split both the state and the transaction processing. Some sharded blockchains such as Monoxide~\cite{wang2019monoxide} or Elastico~\cite{luu2016secure} only split the latter and do not fall into this category.}, the blockchain is split into multiple groups with their own chains of blocks and miners. Each shard maintains and modifies the state of only a subset of the accounts existing in the system. Objects to shards assignments are usually static unless changed in explicit migrations caused by miners or users. A migration locks (or destroys) an object in the source shard and recreates it in the destination shard using an atomic transaction. The object identifier may or may not change during the migration depending on the underlying objects-to-shards mapping system. 
Shards are expanded by running local consensus protocol between shard-specific miners. Some designs~\cite{rapidchain, ethereum2} use a main chain that is used for coordination. The main chain periodically assigns miners to shards to prevent malicious miners from freely migrating and taking over a specific shard. As a result, only miners assigned by the main chain have the right to participate in the intra-shard consensus~\cite{wang2019sok}. Furthermore, the main chain stores block headers of all the shards, which facilitates cross-shard communication~\cite{rapidchain, ethereum2}.

\para{Cross-shard communication and migrations}
Transactions modifying the state of accounts placed in a single shard can be processed using intra-shard consensus similarly as in a non-sharded scenario. If the involved accounts are spread across multiple shards, however, executing the transaction requires cross-shard consensus to ensure the atomicity of transactions. There are two main types of cross-shard consensus protocols, \first protocols based on a two-phase commit protocol~\cite{gray1978notes} such as S-BAC~\cite{chainspace} and Atomix~\cite{kokoris2018omniledger}, and \second mutex-based protocols such as RapidChain~\cite{rapidchain} and the upcoming version of Ethereum~\cite{ethereum2}. In all cases, a cross-shard transaction requires an intra-shard consensus run in each shard holding at least one of the involved accounts together with the run of cross-shard coordination. The latter always causes additional overhead in all the involved shards. If any of the shards involved rejects a transaction, all other shards should likewise reject it to guarantee atomicity; that is, an atomic commit protocol typically runs across all the concerned shards to ensure the transaction is accepted by all or none of those shards. It also means that the processing time of a cross-shard transaction is determined by the slowest shard.

Objects can be migrated across shards by users (in explicit cross-shard transactions~\cite{nguyen2019optchain}) or by miners (as a part of the consensus protocol~\cite{rapidchain}). Performing migrations cause processing overhead for the miners and transaction fees for the end-users. The cost of migrations can be reduced when combined with cross-shard transactions. If account $A$ in shard $1$ sends a transaction to account $B$ in shard $2$, both accounts may remain in their respective shards (causing a costly cross-shard consensus round) or one of the accounts can be migrated to the shard of the other one\footnote{Both accounts can be also migrated to a third or different shards. However, such migration would cause significant overhead to the system.}. In the latter case, the migration cost still needs to be paid, but further processing requires cheaper intra-shard consensus in the destination shard. 

The use of migration can have a significant impact on the performance of the account-based blockchains.
This impact can be positive or negative depending on the migration decisions made.
Splitting frequently interacting communities may negatively impact the throughput of the entire system for many future blocks. On the other hand, migrations can equally spread the load across shards on a per-block basis improving resource utilization. Migrations increase the cost of individual transactions but, if done correctly, can also bring long-term performance gains. Correctly incentivizing decisions that are good for the blockchain as a whole can significantly improve the throughput of the entire system. We further discuss the topic in \Cref{sec:economics}.




\section{Overview}
\label{sec:overview}

The goal of \sysname is to integrate smart, automatic account placement and migrations decisions to improve the throughput of the sharded blockchain \emph{as a whole}. Our system strikes a balance between balanced load distribution, data locality, and the number and costs of performed migrations. \sysname performs migrations that are supported by the underlying consensus protocol, introduce relatively low short-term overhead, and reduces the cost of future transactions in the long run.

A fundamental design principle of \sysname is the implementation of our system on miners as a part of the consensus protocol. While client-based migrations have been proposed for throughput improvements in the UTXO model~\cite{nguyen2019optchain}, such an approach is not effective for account-based blockchains. A transaction in the account-based model modifies the state of multiple accounts (\eg sender, receiver, smart contract) but is authorized only by its sender. It thus restricts potential migrations to moving the sender only. In contrast, migration decisions taken by miners as a part of transaction processing can achieve optimal placement by moving any account involved in a transaction.

In \sysname, all migration decisions are taken based on a state snapshot of the blockchain, are deterministic, and can be verified by other miners. With decision verifiability, \sysname protects against malicious miners who might attempt a denial of service attack by forcing sub-optimal migrations. Our system requires only simple arithmetic operations to take optimal migration decisions and introduces only negligible overhead to the transaction processing. 

\sysname decouples the mining process from the collection of fees and aligns rewards collected by the miners with the throughput of the entire blockchain, rather than with the performance of a single shard. Rational miners are thus incentivized to pay the overhead cost related to automatic migrations. Finally, \sysname is completely transparent for the clients submitting transactions to the blockchain and does not require any client-side modifications.

The performance of a distributed system is tightly coupled to its submitted workload. Before outlining our design, we analyze the transaction history of Ethereum together with state-dependent smart contracts calls and extract new insights that allow understanding expected cross-shard interaction dynamics and shape the design of \sysname.


\section{Assumptions and design goals}
\label{sec:assumptions}

We base our assumptions on Ethereum~\cite{wood2014ethereum, ethereum2}, the main account-based blockchain transitioning into a sharded environment with support for smart contracts. Where Ethereum does not yet specify all the design details of its transition to a sharded operation, we assume functionalities provided by academic sharded blockchains (Omniledger~\cite{omniledger}, Chainspace~\cite{chainspace}, and RapidChain~\cite{rapidchain}). The characteristics of these systems are shown in \Cref{tab:assumptions}.

\begin{table}[t]
\footnotesize
\newcolumntype{A}{>{\raggedright\let\newline\\\arraybackslash\hspace{0pt}}m{2cm} }
\newcolumntype{C}{>{\raggedright\let\newline\\\arraybackslash\hspace{0pt}}m{0.13\linewidth} }
\newcolumntype{D}{>{\raggedright\arraybackslash} m{0.30\linewidth} }
\begin{tabular}{Acccc}
\toprule
 & \textbf{Ethereum+} & \textbf{RapidChain} & \textbf{Chainspace} & \textbf{Omniledger}\\
\midrule
\textbf{Smart Contracts} & \cmark & \xmark & \cmark & \xmark \\
\textbf{Beacon Chain} & \cmark & \cmark & \xmark & \cmark \\
\textbf{Miners reshuffling} & \cmark & \cmark & - & \cmark\\
\textbf{Write set specified by transactions} & ? & \cmark & \cmark & \cmark\\
\bottomrule
\end{tabular}
\caption{\sysname assumptions in existing systems.}
\label{tab:assumptions}
\end{table}

\subsection{Security Assumptions}

We distinguish two types of actors:
\begin{itemize}
 \item \emph{users} are owners of EOAs that use the blockchain;
 \item \emph{miners} are maintainers of the blockchain.
\end{itemize}

We assume the presence of arbitrary malicious actors that can play the role of users or miners and try to disturb the system. No single user or miner is trusted by its peers. However, as for many sharded blockchain designs~\cite{chainspace, omniledger, byzcuit, rapidchain}, we assume that all shards have an honest consensus majority. With the current single-chain economic models applied to a sharded environment, miners may be incentivized to deviate from the protocol when taking migration decisions. In \Cref{sec:economics}, we develop an economic model for sharded blockchain that makes the honest majority assumption more probable in a real-world deployment.

We assume a partially synchronous network for 2PC-based protocols\footnote{This assumption is not required by the cross-shard consensus protocol \textit{per se}, but by the BFT protocol running within each shard.}~\cite{dwork1988consensus}, and a synchronous network for mutex-based protocols (in light of recent replay attacks against sharded blockchains~\cite{byzcuit}).

We assume a sharded blockchain environment as envisioned by Omniledger~\cite{omniledger}. A measure of time is determined from the chain length of an arbitrary shard and is divided into \emph{epochs} of equal length. In every epoch, nodes can manifest their intention to become miners for the next epoch by registering their public key to a dedicated smart contract~\cite{chainspace} (or hardcoded logic on a beacon chain~\cite{omniledger,ethereum2}). The system runs a black box Sybil detection algorithm (typically proof-of-work~\cite{nakamoto2019bitcoin,wood2014ethereum} or proof-of-stake~\cite{ouroboros}) that outputs the list of registered public keys of the nodes that will become miners during the next epoch. At the start of a new epoch, miners are shuffled and assigned to shards at random.

We assume the presence of a main chain (as in Omniledger~\cite{omniledger} and RapidChain~\cite{rapidchain}, and as proposed for Ethereum~\cite{ethereum2}) that stores the block headers of all the shards.
Each miner is a full client for its respective shard and acts as a light client for the beacon chain and all the other shards.
We assume the presence of a mapping service holding current accounts-to-shards assignments (\eg implemented as a Distributed Hash Table). 

Processing a cross-shard transaction requires modifying a set of objects. For a simple transfer transaction (\ie not a call to a smart contract) the set contains the sender and the receiver and can be read directly from the transaction data. For blockchains supporting smart contracts, the list of involved accounts for a specific execution may depend on the current state across multiple shards. In \Cref{alg:contract} for instance, the caller, the contract, and accounts from \emph{users} may be spread across multiple shards. The required state and a list of involved objects can be either proactively locked and provided to the miners by the user as part of the transaction data (as done in Chainspace~\cite{chainspace}) or reactively pulled by miners executing the transactions (as discussed for Ethereum~\cite{ethereum2}). 

Our system is orthogonal to the actual implementation of cross-shard transactions with or without smart contracts. For each transaction, \sysname relies only on a write set (\ie accounts whose state will be modified by this transaction). Such a set is already required to process smart contract transactions (\Cref{sec:background}). Finally, we assume that each cross-shard transaction is forwarded to a shard responsible for its execution. We refer to this shard and more specifically to a miner including the transaction in its block, as the \emph{transaction coordinator}. The transaction coordinator obtains a list of accounts to be modified and coordinates other shards involved in the transaction.

\subsection{Design Goals}
\label{sec:design_goals}


The design of \sysname targets the following properties.

\para{Migration and placement recommendations}
\sysname analyzes interactions between accounts and issues recommendations specifying how an incoming transaction should be handled and, in particular, what (if) migrations should happen.
These recommendations have the goal of keeping frequently interacting accounts within one shard while providing a balanced load across shards.
By reducing the number of cross-shard transactions and their associated overheads, and avoiding performance degradation due to overloaded shards, two goals participate in unison to an increased throughput (total number of transactions per second for a given capacity).

\para{Recommendation verifiability}
Each recommendation is deterministic and can be reliably verified by all other miners.
\sysname recommendations are part of the consensus and block validation protocols. This property is required to ensure the availability of the blockchain.
Without verifiability, malicious miners may attempt to move objects towards an overloaded shard or split frequently interacting communities, thus increasing the cost of transactions and lowering the number of transactions per second~\cite{mirkin2020bdos}.
Such a denial of service attack, even when targeting a single shard, influences the throughput of the entire blockchain due to the impact on cross-shard transactions.

\para{Lightweight recommendations} 
\sysname recommendations are generated on a per-transaction basis.
The system ensures that the amount of required computation is low and can be easily performed by all miners without introducing significant space and time overhead.
\sysname operations remain computationally tractable also when the number of accounts present in the blockchain grows.
\sysname does not introduce any significant network overhead (\ie fetching large, additional state from other shards).

\para{No changes for the clients}
\sysname is transparent for EOA owner and, in contrast to related work~\cite{nguyen2019optchain}, does not require additional operation or maintenance of state by users.

\para{Incentive model}
\sysname provides an incentive model for the miners to motivate them to follow the recommendations. The reward of each miner is proportional to the amount of performed work (\ie the number of mined blocks) and the total amount of rewards acquired by the blockchain as a whole.
Miners are still incentivized to compete for producing new blocks include a maximum amount of transactions.
However, miners do not benefit from keeping excessive numbers of accounts in their shards and ignoring ingoing or ongoing migrations recommended by \sysname.


\section{System Model and notation}
\label{sec:model}

We present the notations used throughout the rest of the paper, and the model in which \sysname operates. 
Notation are summarized in \Cref{tab:notations}. 

\begin{table}[]
\footnotesize
\newcolumntype{C}{>{\raggedright\let\newline\\\arraybackslash\hspace{0pt}}m{0.18\linewidth} }
\newcolumntype{D}{>{\raggedright\arraybackslash} m{0.25\linewidth} }
\begin{tabular}{CDCD}
\toprule
\multicolumn{4}{l}{\textbf{Parameters}}                        \\
\midrule
$s_i \in S$ & states &  $t_i \in T$ & transactions  \\
$o_i \in O$ & objects &  $acc_i \in ACC$ & accounts\\
$b_i$ & balance of $acc_i$ &  $\phi$ & mapping service\\
$c(t_i)$ & cost of $t_i$ & $C_i$ & capacity of $s_i$ \\
$m_{j \rightarrow k}(acc_i)$ & migrations & & \\
\bottomrule
\end{tabular}
\caption{Notations.}
\label{tab:notations}
\end{table}

\subsection{Blockchain Model} 

The blockchain is maintained by a number of miners $m \in M$ validating and processing transactions. We adopt a similar blockchain model as Al-Bassam \etal~\cite{fraudproofs}. We model the blockchain as a set of state variables that encode its state $s \in S$ and transactions $t \in T$; at any time $s \in S$ represents a snapshot of the state of every object (\ie accounts, smart contracts). The blockchain maintains an append-only log of ordered transactions $\{t_0...t_n\} \in T$. The blockchain starts in an initial state $s_0 \in T$ and transitions from one valid state to the valid next state with each valid transaction $t_i(s_i) \rightarrow s_{i+1}$.

\para{Sharded blockchains}
In sharded blockchains, nodes are divided into groups called shards $z \in Z$, and each shard maintains a subset of the objects. Shard $z_j$ at step $i$ maintains $s_{ij}: acc_k \in ACC_j$. We assume a service mapping objects to their respective shard $\phi(acc_i) \rightarrow z_j$ as defined by Chainspace~\cite{chainspace}. 

\para{Transactions lifecycle}
Each node holds all incoming transactions in a fixed-sized \emph{transaction pool} (also called \emph{mempool}). At every time step, the transaction pool of every node is completely filled with transactions from clients.
Executed transactions are removed from the transaction pool. Only valid transactions are considered (\eg, for coin transfers, both the sender and receiver exist and the sender has sufficient funds to make the transfer). Invalid transactions are discarded.

\subsection{Processing Capacity} 
\label{sec:processing-capacity}

The concept of \emph{processing capacity} is key to our model.
Every time period, each shard $z_i$ has a processing capacity $C_i$ indicating how many transactions it can process during that time period while maintaining a constant transaction latency. In practice, this capacity can be limited by a number of factors such a network conditions, the size of the shard, and specific implementations. We assume that each shard has the same capacity ($\forall{i,j} \; C_i = C_j$), and that the capacity of the whole blockchain is the sum of the capacity of all its shards $C = \sum C_i$.


The cost of cross-shard transactions is higher than the cost of intra-shard transaction; we denote $c(t_i)$ the cost of transaction $t_i$. The exact cost depend on the consensus protocol as well as the cross-shard agreement protocol. The cost of each cross- and intra-shard transaction depends also on its size $c(t_i) \propto \textit{size}(t_i)$. The larger the transaction, the longer it takes to propagate the information to all concerned nodes\footnote{We determine the exact cost of transactions for Chainspace~\cite{chainspace} in later sections.}.

To process a transaction, a shard needs to spend some of its capacity equal to the cost of the transaction. For an intra-shard transaction $t_i$, shard $i$ spends $c(t_i)$ and is left with a capacity $C_i = C_i - c(t_i)$. For an cross-shard transaction each concerned shard spends the cost of an cross-shard transaction; so the transaction can only be processed if all shards have enough capacity to process it during this time period.


\para{State migration}
\sysname migrates objects between shards. When object $o_i$ is migrated from shard $z_j$ to $z_k$, $m_{j \rightarrow k}(o_i)$ the shard assignment service $\phi$ is updated accordingly.
Similarly to transactions, state migrations also have a cost for all involved shard that depends on the size of the migrated object $c(m_{j \rightarrow k}(o_i)) \propto \textit{size}(o_i)$.

\section{Observations}
\label{sec:observations}

We start by investigating the transactions in the Ethereum blockchain from the perspective of a sharded operation.
Our observations motivate the design of \sysname.
For each transaction, we extract all the accounts whose state was modified.
Details on data extraction are presented in \Cref{sec:data_extraction}. 

\para{O1. Write-oriented}
In a blockchain, one can securely read the state from any honest participant. In contrast, writing to the blockchain is complex, because the data must be propagated to every single miner and agreed on using consensus protocol. In this work, we focus on writing state to the blockchain.

\para{O2. Hot Spots}
The activity of accounts can vary significantly (\Cref{fig:hotspots}). The top 20\% accounts (\eg popular exchanges) are responsible for over 92\% of overall transactions. In the context of sharding, the most active accounts should not all be placed in the same few (or unique) shard(s).

\begin{figure}[ht]
\centering
\includegraphics[width=\linewidth]{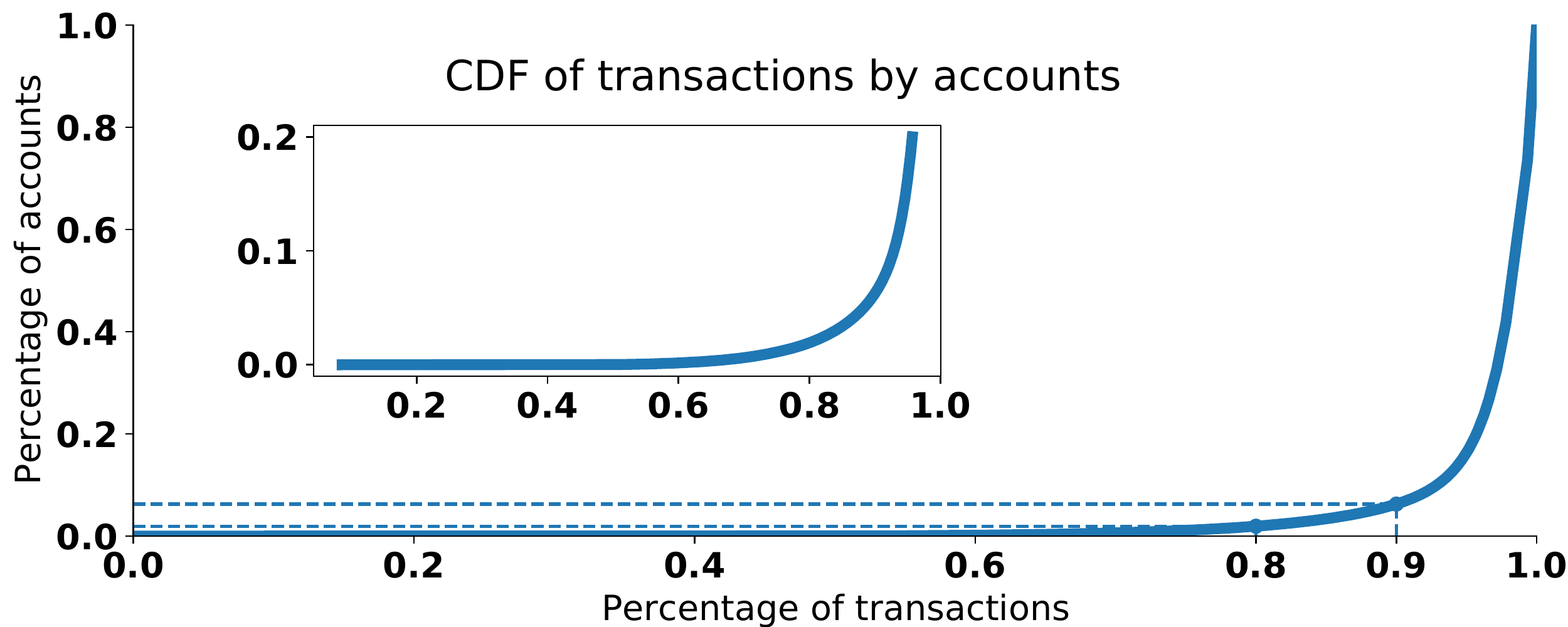}
\vspace{-15pt}
\caption{CDF of the number of transactions all the observed accounts were involved in.}
\vspace{-5pt}
\label{fig:hotspots}
\end{figure}

\para{O3. Communities}
Multiple works reported accounts forming communities, \ie groups of entities that interact frequently with each other~\cite{chen2020understanding, sun2019ethereum}. While the communities change over time, preserving them can significantly increase performance of a sharded blockchain due to the reduced number of cross-shard transactions~\cite{fynn2018challenges}. 



\para{O4. Load spikes}
To maximize the throughput of the system, each shard should utilize its full capacity. Accounts in Ethereum experience bursts of activities caused by the market (\eg Initial Coin Offerings, new tokens being added to exchanges) and ``follow the sun'' cyclical workload. We investigate the load shard of hash-based shard-to-account allocation (\Cref{fig:load}). We observe significant differences in shard load, especially for shorter periods of observation. Without account migrations, a sharded blockchain is not able to fully utilize its capacity. The problem becomes more pronounced with increasing number of shards. 

\begin{figure}[ht]
\centering
\includegraphics[width=\linewidth]{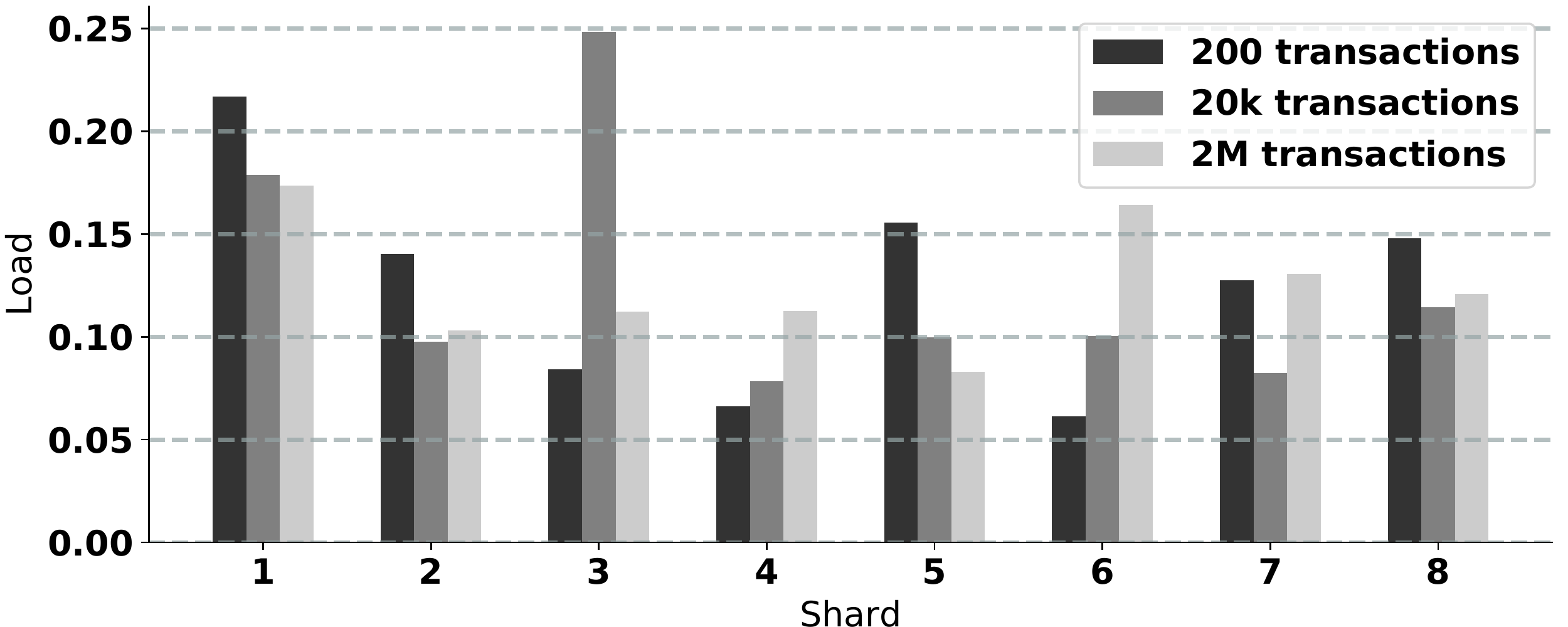}
\vspace{-15pt}
\caption{Shard load for hash-based account allocation for different periods of observation, when using 8 shards.}
\vspace{-5pt}
\label{fig:load}
\end{figure}

\para{O5. Migrating state during inter-shard transaction is cheap}
Under the model presented in \Cref{sec:model}, the cost of EOA migration is equivalent to the cost of an inter-shard transaction\footnote{A cost of of smart contract migration is proportional to its size.}. When two accounts spread across two shards are involved in a cross-shard transaction, one of the accounts can be migrated towards the other one replacing a cross-shard transaction by a migration and an intra-shard transaction. The intra-shard transaction will be processed only by the shard that hosts the accounts after the migration and does not generate additional overhead to the other shard.

\para{O6. Inactive accounts}
Accounts in blockchain are easy to create and are not constantly active. As of April 2020, the number of accounts exceeds 85~Millions, growing at a rate of about 50 to 150 thousands new accounts per day~\cite{etherscan}. However, only 3\% and 5\% accounts are active within one-week and one-month observation periods, respectively. A newly created addresses is used, on average, for 35.45 days before going inactive~\cite{ethereumreport}. At the same time, active accounts are likely to be updated soon after they are updated. An account is updated in a day from its previous activity with 62\% probability~\cite{kim2021ethanos}. We can say that only a fraction of accounts are active at any point of time, but once they are activated, they are likely to be accessed again soon (temporal locality). Inactive objects do not take part in new transactions and should not be migrated between shards even if they are highly connected with active objects. A migration of an inactive object involves a costly inter-shard agreement, does not decrease the state held by the input shard and increases the state held by the output shard without bringing any benefits. 

\para{O7. Smart Contracts}
%
Smart contract migration is a complex process~\cite{fynn2020smart}. A migration of a Smart Contract requires creating a snapshot of its current state, locking it in the input shard and its recreation in the output shard~\cite{fynn2020smart}. However, the process of creating the snapshot is complex and there are currently no efficient mechanisms to perform it. At the same time, the migration cost depends heavily on the size of the snapshot. In contrast to EOAs, the state size of smart contracts can be significant. 

\begin{figure}[ht]
\centering
\includegraphics[width=\linewidth]{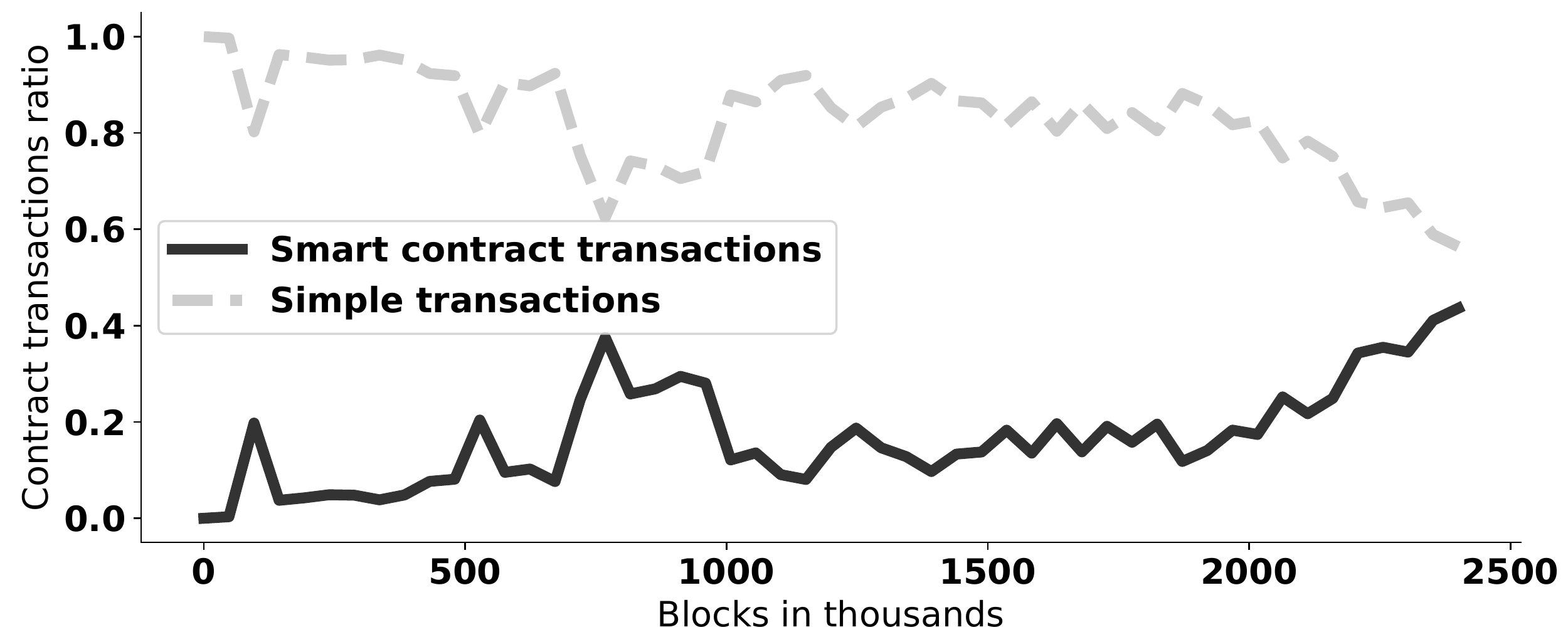}
\vspace{-15pt}
\caption{Simple and smart contract transactions over time.}
\vspace{-5pt}
\label{fig:internal_share}
\end{figure}
 
\para{O8. Internal Transactions}
Internal transactions constitute a growing part of all the transactions in Ethereum. \Cref{fig:internal_share}) presents the percentage of transactions per type. Ordinary user-to-user transfers are on a solid downward trend. Contract transactions take up to 45\% of the recent blocks in our sample. While majority of transactions modify the state of 2 accounts (EOA balances or internal state of smart contracts), some transactions modify up to 50 accounts at a time. Finally, the average number of accounts modified by an average transaction is on the rise over time caused by increased usage of smart contracts.


\section{\sysname design}
\label{sec:design}

\sysname is implemented as a part of the consensus protocol on all the miners of the blockchain. 
For each external cross-shard transaction, our scheduler operates in two steps:
\begin{enumerate}
    \item It determines the main shard and decide of placement of new accounts;
    \item It decides on the migration(s) of existing account(s) towards the main shard.
\end{enumerate}
We describe both steps in subsections below.
\sysname does not migrate any accounts not involved in pending transactions (\textbf{O5.~Migrating state during inter-shard transaction is cheap}) thus avoiding costly migrations that will not bring benefits in the future (\textbf{O6.~Inactive accounts}). The main shard selected during the first step is then used during the second step. Only the main shard will be considered as a potential migration destination.

\subsection{Data structures}

\sysname miners associate an alignment vector $$v_i = [a_{i1}, a_{i2}, ..., a_{in}]$$ with each account (including EOA and CA) in the blockchain where $a_{ij}$ represents the \emph{alignment} of account $i$ towards shard $j$. The \emph{alignment} is a positive integer and represents the total cost of transactions the account performed with the specific shard. When an account is created, the alignment vector values are set to $0$.  When account $acc_i$ in shard $z_i$ is involved in a transaction $t_k$ with account $acc_j$ in shard $z_j$, the respective values of both alignments vectors will be increased by the cost of $t_k$, so that $a_{ij} += c(t_i)$ and $a_{ji} += c(t_i)$. Importantly, $a_i$ will not be updated when $acc_j$ migrates between shards (and vice verse) simplifying the operation.

The alignment vector implements a sliding window approach and takes into account transactions from the last 100 blocks. It allows \sysname to better react to sudden burst of account activity (\textbf{O4.~Load spikes}) and reduce memory overhead, as empty vectors can be dropped from the memory. Due to the large number of inactive accounts (\textbf{O6.~Inactive accounts.}), \sysname maintains alignments vectors for a small fraction of the accounts at a time\footnote{We further show the memory overhead in \Cref{sec:evaluation}.}.

The alignment vector is held locally by each miner allocated to the shard where the account resides. It does not introduce any memory overhead to miners outside the shard and does not require storing any additional information on chain. The alignment vector is dropped (zeroed) when an account is being migrated between shards.

The second \sysname data structure is maintained on the beacon chain and represents the load of each shard in the system. The load for shard $z_i$ is a positive integer that holds total cost of transactions processed by the shard during last 100 blocks. Similarly to the alignment vector, implementing a sliding window approach improves \sysname reactivity to sudden load changes. The load is reported by shards when submitting their block headers to the beacon chain and is certified by the shard-specific miners. Placing the load information on the beacon chain, makes it available to all the miners in the system.

\subsection{Determining the main shard}

The first step is performed by the transaction coordinator. \sysname takes as input a new a list of accounts being modified by the transaction $l_i$, the shard assignment function $\phi$ and the last state of the blockchain $s_i$(as defined by the previous block on the beacon chain). The list is known to the coordinator and includes all the internal transaction caused by the external one (\textbf{O8.~Internal Transactions}). Based on this information, \sysname outputs allocation recommendations for new accounts (that appear on the blockchain for the first time) and a \emph{main shard} for the transaction. 

Based on the list of accounts and $\phi$, \sysname starts by creating a set of shards involved in the transaction. Consider the smart contract from \Cref{alg:contract}, and a transaction $t_j$ invoking the \emph{payAll} function. The list of accounts $l_j$ includes EOA of the caller, CA of the contract and accounts that from the \emph{users} mapping that have less than 10 coins\footnote{Assuming that the contract has enough money to pay all the accounts}.

If the set of shards involved in the transaction is not empty, \sysname then reads the load of each involved shard from the beacon chain and chooses the least loaded one as the \emph{main shard} for this transaction. If the set of shards involved in the transaction is empty\footnote{This may happen if the transaction modified the state of new accounts - e.g., a coinbase operation.}, our scheduler chooses the least loaded shard from all the shards.

\sysname assigns all the new accounts from $l_i$ to the main shard. The main shard identifier is then passed to shards holding non-new accounts involved in the transaction. The whole procedure for selecting the \emph{main shard} is illustrated by \Cref{alg:main_shard}.

\begin{algorithm}
  \caption{Main shard selection}\label{alg:main_shard}
  \begin{algorithmic}[1]
    \Procedure{selectMainShard}{$l_i, \phi, s_i,$}
      \State $\textit{involvedShards} \gets \text{set()}$
      \State $\textit{newAcc} \gets \texttt{new accounts from }l_i$
      \For{\texttt{acc} \textbf{in} $l_i$}
        \State \textit{involvedShards.add(}$\phi(acc)$\textit{)}
      \EndFor
    \If {\textit{involvedShards.empty()}}
    \State $\textit{mainShard} \gets \textit{lowestLoad(allShards)}$
    \Else
    \State $\textit{mainShard} \gets \textit{lowestLoad(involvedShards)}$
    \EndIf
    \For{\texttt{acc} \textbf{in} $newAcc$}
        \State $\phi(acc) \gets \textit{mainShard}$
      \EndFor
      \State \textbf{return} $mainShard$
    \EndProcedure
  \end{algorithmic}
\end{algorithm}

The main shard selection is based uniquely on the load shards. It allows \sysname to migrate accounts to the least loaded shard performing load balancing (\textbf{O2.~Hot Spots}).


\subsection{Deciding migrations of existing accounts}

The second step takes as input an account $acc_i$ involved in a cross-shard transaction, the shard assignment function $\phi$ and transaction-specific \emph{main shard} determined in the first step. The procedure is invoked only by miners associated with shard $z_i$, where $acc_j$ resides so that $\phi(acc_j) = z_j$. Importantly, the procedure does not require any external (from other shards) data and can be performed within the shard in question.

From the local state, \sysname consults account alignment vector and extract account's alignment towards its current shard and the main shard. If the alignment towards the current shards multiplied by the cost of the cross-shard transaction is smaller than the sum of alignments towards the other shards, the account will be migrated to the \emph{main shard}. Otherwise, the account remains in its current shard.

Taking into account the alignment vector stops the load balance-based migration if the account has strong connection with its current shard. Such an approach preserves existing cluster of frequently interacting accounts (\textbf{O3.~Communities}). The condition is more likely to stop the migration with increasing cost of cross-shard transaction. The whole procedure deciding on migrations is illustrated by \Cref{alg:migration}. For consensus protocols requiring all the accounts to reside in a single shard before processing (\ie mutex-based), \sysname always migrates all the involved accounts to the main shard.

\begin{algorithm}
  \caption{Migration decision algorithm.}\label{alg:migration}
  \begin{algorithmic}[1]
    \Procedure{shouldMigrate}{$acc_i, \phi, \textit{mainShard}$}
    \State $\textit{V} \gets \text{the alignment vector for~} acc_i$
    \If {\textit{(c(crossShard)V[acc])} < \textit{(sum(V) - V[acc])}}
    \State $\textit{migrate(}acc_i\textit{, mainShard)}$
    \EndIf
    \EndProcedure
  \end{algorithmic}
\end{algorithm}





\section{Economics}
\label{sec:economics}
Maintaining a blockchain requires resources to store (disk space), exchange (network bandwidth) and verify (CPU cycles) transactions. In open systems, miners are incentivized to perform this useful work in exchange for a financial reward. Incentive mechanisms for open sharded blockchains is currently a gap in the blockchain literature~\cite{sok-consensus}. We argue that naively applying incentives mechanisms from traditional (single-committee) blockchains to sharded systems has shortcomings, and then propose a novel design to fix them. 

\para{Purpose of the incentive mechanism}
The purpose of the incentive mechanism is to motivate rational miners to follow the protocol. In the absence of externalities (\eg secondary markets), it ensures that miners following the protocol collect a higher financial reward than if they were deviating from it. The main purpose of \sysname is to increase the performance of the blockchain. We require, therefore, an incentive mechanism that also goes in that direction: miners should be incentivized to follow the recommendations of \sysname.

\para{Traditional incentive mechanism}
Starting from Bitcoin, incentive schemes~\cite{nakamoto2019bitcoin,ethereum2,wood2014ethereum} typically involve collecting transaction fees from the end-user. The leader of the consensus protocol collects all the fees associated with the transactions it proposes; this leader is thus often rotated following system-specific strategies. 
Users are free to offer any fee for processing their transaction. Rational miners prioritize high fee transactions when constructing their blocks to maximize their financial reward. 

A naive extension of the incentive mechanism described above could work as follows. User associate transactions fees as in single-committee systems. These fees are shared amongst the leaders of the intra-shard consensus protocol of every shard involved in the transaction.
A similar incentive mechanism is adopted by Zilliqa~\cite{zilliqa}.

We argue that directly applying this mechanism to a sharded environment does not incentivize rational miners to maximize the system's performance. We show that, if given the right to perform account migrations, miners financially benefit from taking actions that harm the total system performance by creating imbalances between shards.


\begin{lemma}\label{le:greedy-miner}
In the sharded environment described in \Cref{sec:assumptions}, rational miners financially benefit from concentrating as many accounts as possible into their shard.
\end{lemma}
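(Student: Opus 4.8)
The plan is to turn the informal claim into a statement about a miner's per-block reward and show that this reward is monotonically non-decreasing in the number of accounts the miner holds in its shard, strictly increasing whenever a migration can convert a cross-shard transaction into an intra-shard one. First I would pin down the reward model implied by the naive fee-sharing scheme described just above the lemma: the leader of shard $z_i$ collects the \emph{full} fee $f$ of every intra-shard transaction it includes, but only a fraction (at most $f/k < f$ for a transaction spanning $k$ shards) of the fee of every cross-shard transaction touching $z_i$. Since the transaction pool is always full (\Cref{sec:model}) and each shard runs under a fixed per-period capacity $C_i$ (\Cref{sec:processing-capacity}), the leader's optimization reduces to packing $C_i$ with the transactions of highest \emph{fee-per-capacity}; I would therefore define this ratio for both transaction types and treat the leader as a greedy capacity-filler.

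The core step is a dominance comparison between the two variants of a fee-$f$ transaction that touches the miner's shard. The intra-shard variant costs $c_{\mathrm{intra}}$ of $z_i$'s capacity and returns the whole fee, giving fee-per-capacity $f/c_{\mathrm{intra}}$; the cross-shard variant costs $c_{\mathrm{cross}}$ and returns at most $f/k$, giving at most $(f/k)/c_{\mathrm{cross}}$. Because cross-shard transactions are strictly more expensive than intra-shard ones ($c_{\mathrm{cross}} > c_{\mathrm{intra}}$, \Cref{sec:processing-capacity}) and carry only a fractional fee, the intra-shard variant strictly dominates. Migrating an external account $acc_j$ into $z_i$ via $m_{j \rightarrow i}(acc_j)$ converts every future transaction between $acc_j$ and an account already resident in $z_i$ from the dominated cross-shard form into the dominating intra-shard form, enlarging the set of high-fee-per-capacity transactions available to fill $C_i$. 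A standard greedy-packing argument then gives the monotonicity: adding strictly better options to the candidate set can only raise the value of the optimal packing, so any such migration weakly increases the leader's achievable reward, and strictly increases it whenever at least one convertible cross-shard transaction exists. Iterating this over all external accounts yields exactly the claim that the miner is driven to pull in as many accounts as possible.

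The final loop I would close is the incentive conclusion together with the one genuine subtlety. Because the miner can always decline to migrate, migration is a weakly dominant action under this scheme, so a rational miner never loses and strictly gains whenever convertible traffic is present. The step I expect to be the main obstacle is accounting honestly for the \emph{one-time} migration cost $c(m_{j \rightarrow i}(acc_j))$ against the \emph{recurring} per-block fee gain: I would amortize this fixed cost over the stream of future transactions involving $acc_j$, invoking the temporal-locality observation (Observation~O6, that active accounts are re-accessed soon) to argue that over the sliding window the cumulative per-block gain dominates the fixed migration cost, keeping the net benefit positive. A secondary, milder obstacle is verifying that concentration never backfires once the shard saturates; here I would observe that saturation merely means the leader is already packing $C_i$ with its best available transactions, so supplying additional intra-shard opportunities can only raise, never lower, the optimum, leaving the monotonicity intact.
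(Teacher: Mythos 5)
Your proposal is correct, but it takes a noticeably different and more elaborate route than the paper. The paper's own proof never touches fee splitting or the capacity-cost asymmetry: it argues purely at the level of the leader's \emph{choice set} --- the leader can only include transactions involving accounts resident in its shard, that set is a subset of all submitted transactions, so pulling in more accounts enlarges the pool from which the leader greedily picks the highest-fee transactions, which can only increase its reward. Your argument instead builds a fee-per-capacity dominance comparison: an intra-shard transaction returns the full fee $f$ at cost $c_{\mathrm{intra}}$, while a cross-shard one returns only a share of $f$ at the larger cost $c_{\mathrm{cross}}$, so migrating an account in upgrades dominated candidates to dominating ones and a greedy packer's optimum can only rise. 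This buys you something the paper's proof does not: it explains why the miner prefers \emph{full residency} of the counterparty (conversion of cross- to intra-shard form) rather than merely having the transaction touch its shard, and it is the only one of the two arguments that explicitly confronts the one-time migration cost (amortized via the temporal-locality observation) and the saturation objection. The price is that your argument leans on two extra modelling assumptions --- the equal-or-fractional fee split among involved leaders and $c_{\mathrm{cross}}>c_{\mathrm{intra}}$ --- neither of which the paper's more elementary subset argument needs. Both proofs are informal at the same points (neither quantifies ``highest fees'' rigorously), so your amortization handwave is no worse than the paper's silent omission of migration costs; flagging it explicitly is, if anything, an improvement.
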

\begin{proof}
Miners are periodically elected as leaders according to the intra-shard consensus protocol and propose new blocks. When acting as the leader, rational miners choose the clients' transactions to include in their next proposal by selecting those with the highest fees. They can however only include transactions involving accounts in their own shards: these transactions are by definition a subset of the total transactions submitted to the system (for any epoch). As a result, miners have less options to select high-fees transactions than if they could choose amongst all transactions.
To increase the number of transactions that involve their shard, and thus increase their choice of transactions, miners are motivated to concentrate a large portion of accounts in their shard. 
\end{proof}
\Cref{le:greedy-miner} indicates that rational miners may financially benefit from actively resisting optimal placement recommendations, which may worsen the system performance.




\para{Adapting the model for sharded blockchains}
To overcome the shortcoming presented above, we propose an alternative solution that decouples the process of collecting transactions fees from cashing them in. We leverage the fact that miners are randomly assigned to shards, and thus cannot predict which shard they will integrate next (\Cref{sec:assumptions}). 
The incentive mechanism operates across every two consecutive epochs:
\begin{itemize}
    \item During epoch $n$, miners collect the fees of transactions that involve their shard and lock them into a shard-special deposit (as opposed to adding them to their private accounts). This deposit keeps a fine-grained accounting of the fees that each miner of the shard collected during the epoch. We follow classic incentive mechanisms and attribute the transactions fees to the current leader of the consensus protocol.
    \item Upon epoch change, miners are randomly shuffled and re-assigned to other shards. Upon entering the next epoch ($n+1$), miners cash in the transaction fees deposited into their new shard's deposit, as a pro-rato of their contribution during the previous epoch.
\end{itemize} 



\begin{figure}[ht]
\centering
\includegraphics[scale=0.7]{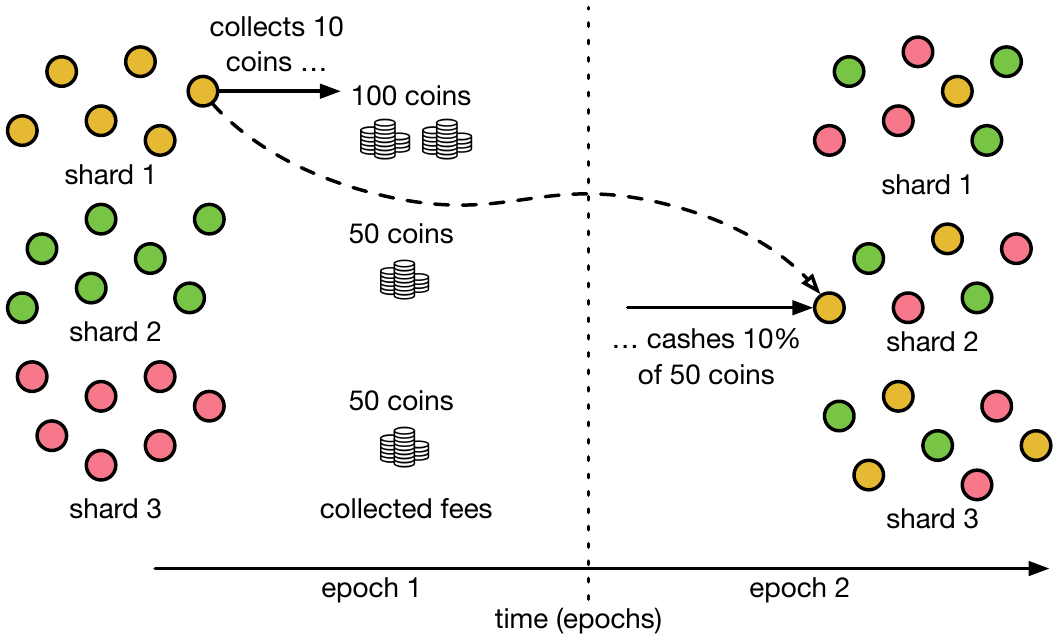}
\caption{Incentive model for \sysname.}
\label{fig:economics}
\end{figure}

Consider a scenario with 3 shards: $z_1$, $z_2$ and $z_3$ and miner $m_1 \in z_i$ for epoch $e_n$ (\Cref{fig:economics}). During epoch $n$, shard $z_1$ collects a total of 100 coins in transaction fees, $z_2$ collects 50 coins, and $z_3$ collects 50 coins as well. These fees are locked in their respective shard's deposit; that is, the deposit of $z_1$ holds 100 coins, and the deposits of $z_2$ and $z_3$ hold 50 coins each. No miners have access to these deposits for the time being.
Let's say that miner $m_1$, when acting as leader, proposed transactions containing a total of 10 coins of fees during epoch $n$. That is, we attribute $10\%$ of the total fees collected by $z_1$ during $e_n$ to miner $m_1$. 
During epoch $n+1$, $m_1$ is assigned to shard $z_2$. Upon entering the epoch, it cashes in $10\%$ of the deposit accumulated by $z_2$ during epoch $n$. That is, $m_1$ cashes in 5 coins.

\para{Effectiveness analysis}
We argue that our proposed incentive scheme incentivizes rational miners to increase the total system's capacity.

\begin{lemma} \label{sec:good-miner-1}
Each epoch $n$, the expected reward of miners is proportional to the total transaction fees collected in the system $x^{n-1}_{tot}$  during the previous epoch.
\end{lemma}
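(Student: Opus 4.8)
The plan is to write a single miner's epoch-$n$ reward in closed form and then take the expectation over the random shard reassignment. Fix a miner $m$. Following the mechanism, let $p_m$ denote the fraction of its epoch-$(n-1)$ shard's fees that were attributed to $m$ while it acted as leader, and for each shard $j \in Z$ let $D_j$ be the deposit accumulated by $j$ during epoch $n-1$, i.e. the total fees of the transactions involving $j$. By construction these deposits partition the system's fees, so $\sum_{j \in Z} D_j = x^{n-1}_{tot}$. Upon entering epoch $n$, $m$ is reassigned to a shard $\sigma(m)$ and cashes in the fraction $p_m$ of that shard's deposit, so its realized reward is $R_m = p_m \, D_{\sigma(m)}$. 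This matches the worked example, where $m_1$ contributed $10\%$ in $z_1$ and draws $10\%$ of $z_2$'s deposit.

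Next I would invoke the reassignment assumption from \Cref{sec:assumptions}: at each epoch boundary miners are shuffled and assigned to shards uniformly at random by the black-box Sybil-resistant procedure. Crucially, this assignment is drawn only \emph{after} the epoch-$(n-1)$ fees have been collected and irrevocably locked in the per-shard deposits, so $\sigma(m)$ is independent of both $\{D_j\}_{j \in Z}$ and $p_m$. Hence $\mathbb{E}[D_{\sigma(m)}] = \frac{1}{|Z|}\sum_{j \in Z} D_j = \frac{x^{n-1}_{tot}}{|Z|}$, and by independence $\mathbb{E}[R_m] = p_m\,\mathbb{E}[D_{\sigma(m)}] = \frac{p_m}{|Z|}\,x^{n-1}_{tot}$. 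Since the constant $\frac{p_m}{|Z|}$ does not depend on how the fees are distributed across shards, the expected reward is proportional to $x^{n-1}_{tot}$, which is the claim; summing over a shard's miners, whose fractions $p_m$ partition to one, gives the aggregate statement.

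The main obstacle is justifying the decoupling step, namely that the expected value of the deposit a miner draws equals the \emph{system average} $x^{n-1}_{tot}/|Z|$ rather than the deposit of any shard the miner could have deliberately enriched. This rests entirely on the independence of $\sigma(m)$ from the previous epoch's fee distribution, which I would argue follows because the shuffle consumes fresh randomness applied only at the epoch change and because fees are sealed in the deposits beforehand, so a miner cannot correlate its destination shard with a shard it boosted. A secondary clarification worth stating is the reading of ``proportional'': the proportionality constant is miner-specific through $p_m$, so the lemma is cleanest per miner. As a consistency remark (not needed for the proof), when shards hold equally many miners the fractions claimed against any one deposit sum to one in expectation, so each deposit is fully distributed on average and no coins are created or destroyed by the scheme.
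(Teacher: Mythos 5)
Your proof is correct and follows essentially the same route as the paper's: write the epoch-$n$ reward as an expectation over the uniformly random shard reassignment, use $\sum_j D_j = x^{n-1}_{tot}$, and conclude proportionality with constant $1/k$. You are in fact slightly more careful than the paper, which omits the miner-specific pro-rata fraction $p_m$ and does not spell out the independence of the reassignment from the locked deposits; both refinements are welcome but do not change the argument.
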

\begin{proof}
The expected reward of a miner during epoch $n$ is $E^{n}(x) = \sum_{i=1}^{k} x_i^{n-1} p_i$, where $x_i^{n-1}$ is the total reward collected by shard $i$ during epoch $n-1$, $p_i$ is the probability that the miner ends up in shard $i$ in epoch $n$, and $k$ is the total number of shards in the system. Since miners are randomly assigned to shards, $\forall i,j, \;\; p_i = p_j = \frac{1}{k}$. Thus $E^n(x) = \frac{1}{k} \sum_{i=1}^{k} x_i^{n-1} = \frac{1}{k} x_{tot}^{n-1}$.
\end{proof}

\begin{lemma} \label{sec:good-miner-2}
The total fees collected in the system $x_{tot}$ increases with the total capacity of the system $C$.
\end{lemma}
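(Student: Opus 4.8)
The plan is to show that $x_{tot}$ is a monotonically non-decreasing function of $C$, and that the increase is actually realized whenever there is unmet demand---which is always the case under our model. The key enabling fact comes from the transaction lifecycle in \Cref{sec:model}: the mempool of every node is kept completely full at every time step. Consequently the system is \emph{capacity-bound} rather than \emph{demand-bound}, so the number of transactions a shard processes is limited by its capacity $C_i$, not by a shortage of pending transactions.

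First I would formalize fee collection as a constrained optimization. A rational leader (invoking the greedy behavior already used in \Cref{le:greedy-miner}) selects, among the pending transactions, a subset whose per-shard costs fit within the available per-shard capacities, so as to maximize the total fee collected. Writing $F(C)$ for this maximal total fee as a function of the capacity profile, the per-shard budget constraints define the feasible set of admissible selections, and $x_{tot}$ is precisely the optimal value $F(C)$.

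Next I would exploit the fact that increasing capacity only \emph{relaxes} these constraints. If $C' \ge C$ componentwise (each shard's capacity is increased while the placement of accounts and the pending workload are held fixed), then every selection feasible under $C$ remains feasible under $C'$; the feasible set grows, so the optimum cannot decrease, giving $F(C') \ge F(C)$. In particular a leader can always replicate its previous optimal selection, so monotonicity is weak at worst. Because the mempool is full, the slack created by the extra capacity can be consumed by additional pending transactions, each carrying a non-negative (and for genuine transactions strictly positive) fee; hence the extra capacity is actually used and $x_{tot}$ strictly increases rather than merely staying flat.

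The main obstacle is the sharded structure: total capacity $C = \sum_i C_i$ is not a single scalar budget but is partitioned across shards, and a cross-shard transaction consumes capacity simultaneously in every shard it touches. The monotonicity argument therefore has to be carried out at the level of the per-shard constraints rather than a single aggregate constraint, and I would interpret an increase in total capacity $C$ as a componentwise increase of the $C_i$ (consistent with the paper's assumption $\forall i,j\; C_i = C_j$). Care is also needed because cross-shard transactions are more expensive than intra-shard ones, so the fees collected per unit of consumed capacity differ across transaction types; but since we only need the feasible set to enlarge---not a closed form for $F$---this heterogeneity does not break the argument, it only prevents an exact formula for the rate of increase.
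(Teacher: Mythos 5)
Your proposal is correct and rests on exactly the same pillar as the paper's own proof: capacity is the scarce resource while pending transactions are abundant (the mempool is always full), so a componentwise increase in the per-shard capacities $C_i$ lets each shard process more fee-bearing transactions, and summing over shards gives that $x_{tot}$ and $C=\sum_i C_i$ rise together. Your constrained-optimization framing (feasible set enlarges when budgets relax, hence $F(C')\ge F(C)$) and your explicit handling of cross-shard transactions consuming capacity in several shards at once are a more rigorous rendering of the paper's informal monotonicity claim $x_i(C_i)$, not a different argument.
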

\begin{proof}
As described in \Cref{sec:model}, we assume that the shards' processing capacity $C_i$ is a scarce resource and that clients transactions are abundant. As a result, if the shards' capacity increases, they can process more transactions per epoch and thus collect more fees. This implies that the fees $x_i$ collected by shard $i$ increases with the shards' capacity $C_i$.
We can thus express $x_i$ in terms of $C_i$ as a monotonically increasing function: $x_i(C_i)$.

The total fee collected in the system is defined as $x_{tot} = \sum_{i=1}^k x_i$. We can thus write $x_{tot} = \sum_{i=1}^k x_i(C_i)$ to show that the total fees $x_{tot}$ increases with the shards' capacity $C_i$.

\Cref{sec:processing-capacity} defines the total capacity of the system as the sum of the capacity of every shard: $C = \sum_{i=1}^k C_i$, which means $C$ increases with $\{C_i\}_{i=1}^k$. Combining those observations, we have that both $x_{tot}$ and $C$ increase with the shards' capacity $\{C_i\}_{i=1}^k$. It follows that the total collected fees $x_{tot}$ increases with the total system's capacity $C$: $x_{tot}$ and $C$ are positively correlated. 
\end{proof}

\begin{lemma}
The expected reward of miners increases with the total system's capacity.
\end{lemma}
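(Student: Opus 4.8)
The plan is to chain the two preceding lemmas by composition, since the statement is precisely their transitive consequence. First I would invoke \Cref{sec:good-miner-1}, which established that the expected reward in epoch $n$ is $E^n(x) = \frac{1}{k} x_{tot}^{n-1}$. The key observation here is that $\frac{1}{k}$ is a strictly positive constant, independent of the fees, so $E^n(x)$ is a strictly increasing (indeed linear) function of the total fees $x_{tot}^{n-1}$ collected during the previous epoch.

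Next I would invoke \Cref{sec:good-miner-2}, which showed that $x_{tot}$ and the total system capacity $C$ are positively correlated, i.e.\ $x_{tot}$ is a monotonically increasing function of $C$ via the per-shard monotone maps $x_i(C_i)$ together with the additive relations $x_{tot}=\sum_i x_i$ and $C=\sum_i C_i$. Composing a monotone increasing map with a positive scaling by $\frac{1}{k}$ again yields a monotone increasing map, so $E^n(x)$ increases with $C$. This composition is the entire content of the argument.

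The only point requiring care is the epoch bookkeeping: \Cref{sec:good-miner-1} relates the reward in epoch $n$ to the fees of epoch $n-1$, whereas \Cref{sec:good-miner-2} is a statement about the capacity in force during the epoch in which those fees are actually collected. Under the model of \Cref{sec:processing-capacity}, where each shard's capacity $C_i$ is treated as a fixed, stationary scarce resource across epochs, the capacity governing epoch $n-1$ coincides with the system capacity $C$, so the two lemmas compose without an index mismatch. I do not expect any substantive obstacle here: once the epoch indexing is aligned and one notes that the positive factor $\frac{1}{k}$ preserves monotonicity, the conclusion follows immediately, making this essentially a one-line corollary of the previous two lemmas.
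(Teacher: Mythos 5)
Your proposal is correct and follows essentially the same route as the paper: the paper's proof is exactly the transitive composition of the two preceding lemmas (expected reward increases with total fees, total fees increase with total capacity). Your additional remarks on the positivity of the $\frac{1}{k}$ factor and the epoch-index alignment are sensible refinements but do not change the substance of the argument.
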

\begin{proof}
\Cref{sec:good-miner-1} implies that the expected reward of miners increases with the total fees collected in the system. \Cref{sec:good-miner-2} shows that the total fees collected in the system increases with the total system's capacity. Therefore, the expected reward of miners increases with the total system's capacity.
\end{proof}



\section{Evaluation}
\label{sec:evaluation}
We provide details on our data set as well as the setup and results of our simulations. 

\subsection{Data Extraction}\label{sec:data_extraction}
We download first 2M blocks Ethereum transaction history (1 year). We extract 8M non-coinbase transactions and all the accounts that were modified during each transaction. We use openethereum v3.2.3\footnote{\url{https://openethereum.org/}} operating in archive mode that allows to recompute all the intermediary states of the blockchain. To extract the transactions and state modifications, we create a Python tool based on web3.py\footnote{\url{https://github.com/ethereum/web3.py}} that queries the client with \emph{trace\_replayTransaction} calls in \emph{stateDiff} mode. We made the code and the dataset publicly available to the scientific community \footnote{Omitted for submission.}. 

\subsection{Setup}
We implement a Python-based simulator to evaluate the effectiveness of our approach. The simulator closely follows the model presented in \Cref{sec:model}, operates in rounds and takes transactions (extracted in \Cref{sec:data_extraction}) as the input workload. Before the first round, the simulator fills up the mempool with transactions from the input workload, and in the beginning of each subsequent round, the simulator tops up the mempool from the input workload. 

The size of the mempool is fixed and set up using simulation parameters. The transactions are processed in the order of arrival by the blockchain. The policy being evaluated indicates placement and in the case of \sysname, migration of objects. Each transaction increases the load of one or multiple shards. A transaction can be processed in the current round, only if there is enough processing capacity left in all the involved shards. The unprocessed transactions will remain in the mempool and be processed during subsequent rounds. The simulator reports the following performance metrics:

\begin{itemize}
        \item \textbf{Throughput} - the global throughput of the entire blockchain in terms of the number of transactions per block. 
        \item \textbf{Latency} - the average elapsed time to complete the processing of the transactions in the workload. We measure the elapsed time to complete a transaction in terms of number of rounds (blocks), \ie from the round when a transaction is initially read until the round when a transaction is added to the blockchain. 
        \item \textbf{Wasted Capacity} - the load-balancing performance of the system in terms of \emph{residual capacities} of the shards summed over all the rounds. Residual capacity of the shards is sum of unused capacity of the shards at the end of a round. 
        \item \textbf{Cross-shard transaction ratio} - the percentage of transactions that involve accounts from multiple shards. For \sysname, each migration is accounted as a separate cross-shard transaction.

\end{itemize}

We compare \sysname against hash-based and metis policy. The hash-based policy represents an approach used in existing sharded blockchains~\cite{rapidchain, kokoris2018omniledger, chainspace, ethereum2}, assigning accounts to shards based on their identifier and does not perform migrations. The Metis policy is a \textit{hypothetical} one that reads all the transactions in the beginning of the first round, at once and proactively performs sharding using the well-known metis graph partitioning (a.k.a. community detection) algorithm~\cite{metis} on the transaction graph, whose nodes are individual accounts and edge weights indicate the number of transactions between the accounts~\cite{dhillon2005fast}. 

The metis algorithm computes a desired number of ``balanced'' partitions, each corresponding to a shard, on an input transaction graph---the objectives of partitioning are to minimise the total weight of inter-partition edges (\ie minimising inter-shard transactions) and to minimise variance across partitions in terms of their total of intra-partition edge weights (\ie achieving similar number of intra-shard transactions in each shard). We do not compare against UTXO solutions such as OptChain~\cite{nguyen2019optchain} due to data model incompatibility. 


We verify the impact of the following parameters on the performance:
\begin{itemize}
    \item \textbf{Number of shards} - we vary this parameter from 1 to 60 shards, and set its default value to 16 shards. Higher number of shards means increased processing capacity, but also more cross-shard interactions and load balancing challenges.
    \item \textbf{Cross-shard transaction costs} - we assume a fixed cost of all the cross-shard transactions and measure the impact of changing this cost from one (as costly as intra-shard transaction) to ten. The actual cost depends on the consensus protocol and its implementation. We set the default value to 2 observed for Chainspace in \Cref{sec:prototype}. This parameter also impacts the cost of migrations performed by \sysname. We do not migrate smart contract accounts due to difficult to determine migration cost. 
    \item \textbf{Shard processing capacity} - we investigate the impact of modifying the processing capacity of a single shard. We set the default value to 200 (\ie 200 intra shard transactions per block) as observed for Ethereum~\cite{etherscan}.
    \item \textbf{Mempool-to-capacity ratio} - we express the mempool size in terms of the ratio of processing capacity of the entire blockchain per block. The mempool size of the system in practice depends on the rate of transaction submissions (\ie rate of arrival to mempool buffer) and processing speed (\ie rate of departures) of the blockchain. 
    
\end{itemize}
In each experiment, we only vary one system parameter while the rest of them take their default values. We start by measuring the performance of all the policies in terms of throughput and latency and later explain the results by observing \emph{wasted capacity} and \emph{cross-shard transaction ratio}.


\subsection{Results}

\begin{figure}[h]
  \centering
  \includegraphics[width=\linewidth]{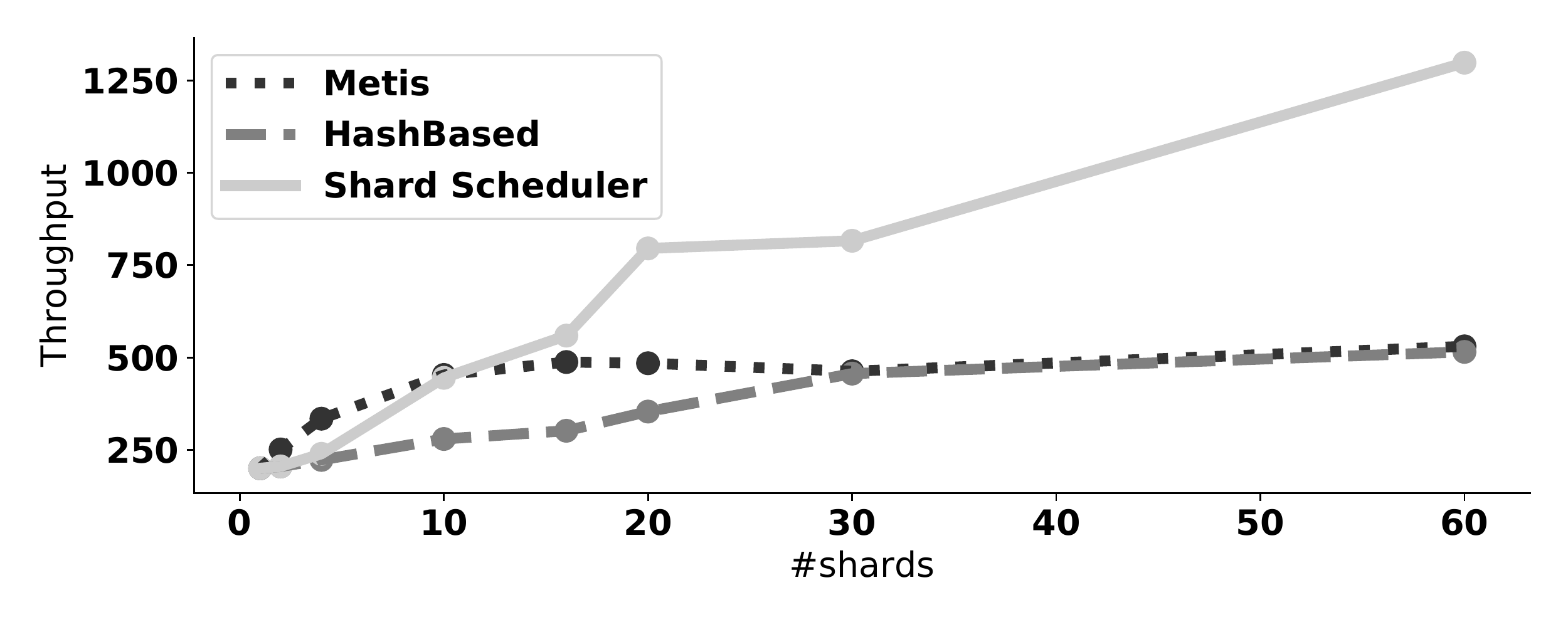}  
  \vspace{-25pt}
  \caption{Throughput vs number of shards.}
  \vspace{-5pt}
  \label{fig:sub_num_shards}
\end{figure}

\para{Throughput} \Cref{fig:sub_num_shards} shows the impact of the \textit{number of shards} on the throughput. We observe that \sysname achieves increasingly better throughput as the number of shards increases. On the other hand, the throughput of both metis and hash-based policies flatten out with increasing number of shards. \sysname improves the throughput by 100\% for 16 shards and by 250\% for 60 shard over the hash-based approach. \sysname also outperforms the theoretical metis policy, which uses future transaction information, for more than 10 shard and achieves similar performance for lower values.  

\begin{figure}[h]
  \centering
  \includegraphics[width=\linewidth]{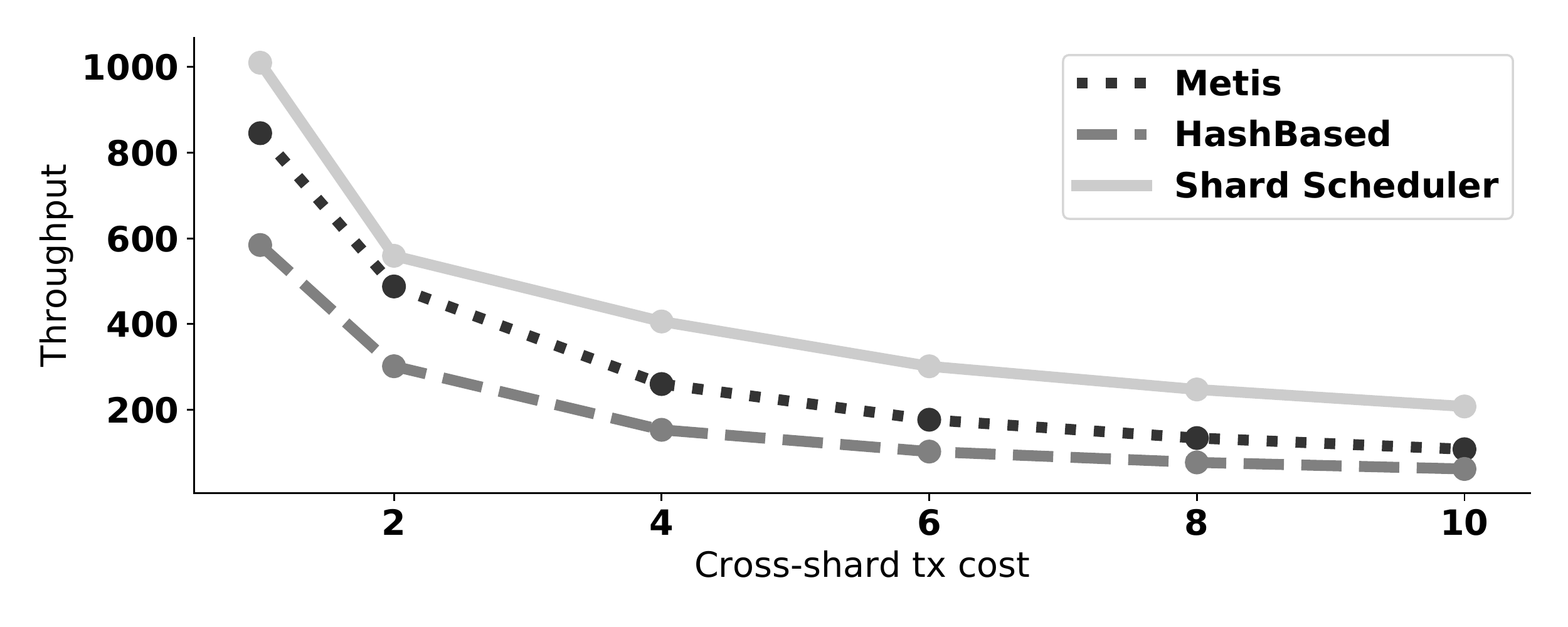}  
  \vspace{-25pt}
  \caption{Throughput vs cross-shard transaction cost.}
  \vspace{-5pt}
  \label{fig:sub_cross_shard_cost}
\end{figure}

\Cref{fig:sub_cross_shard_cost} shows the impact of varying \textit{cross-shard transaction costs} on the throughput. Higher processing cost results in lower throughput. For all the costs, \sysname achieves the highest throughput. The \sysname performance gain remains steady over both hash-based (80-95\% throughput increase) and  metis (10-40\% throughput increase) policies.

\begin{figure}[h]
  \centering
  \includegraphics[width=\linewidth]{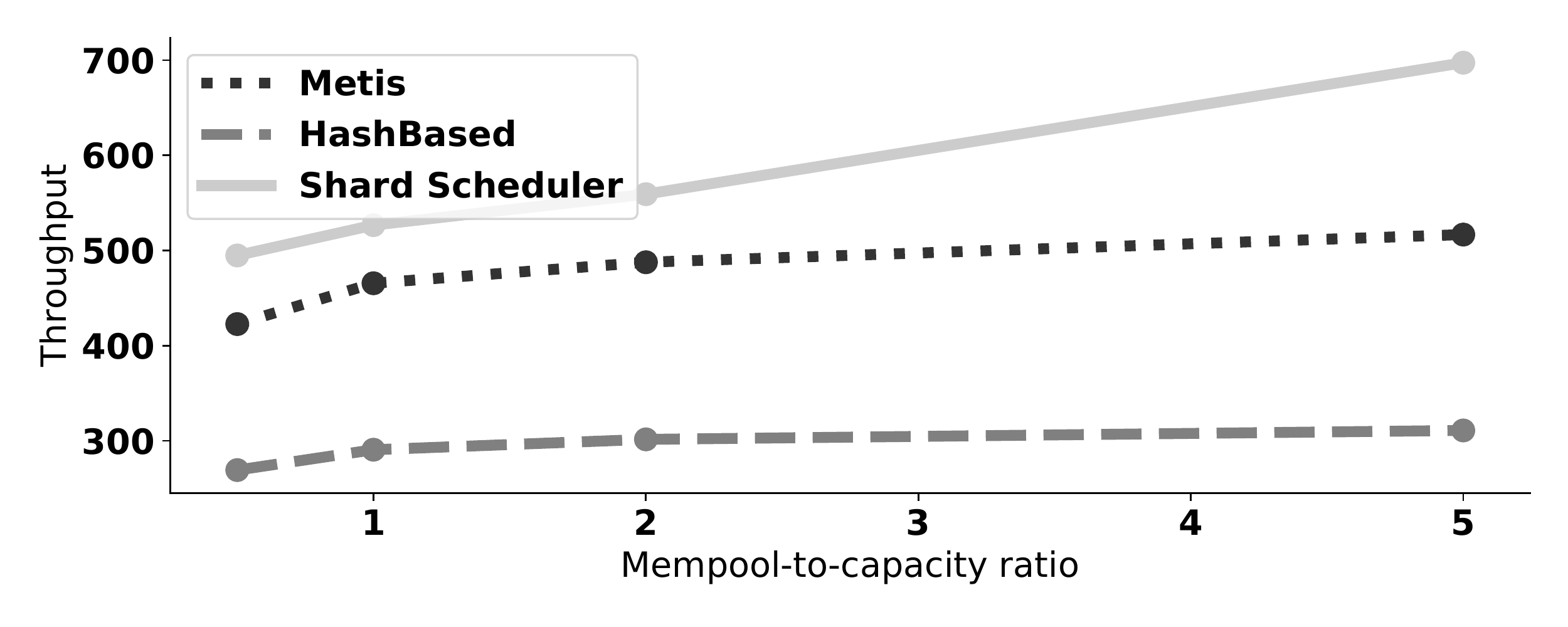} 
  \vspace{-25pt}
  \caption{Throughput vs mempool size.}
  \vspace{-5pt}
  \label{fig:sub_mempool_cap_ratio}
\end{figure}

In \Cref{fig:sub_mempool_cap_ratio}, we vary the \textit{mempool size} as multiples of shard processing capacity. Larger mempool size allows to achieve better load balancing and improve the throughput of all the policies. However, the impact of an increased mempoll size on hash-based and metis policies is limited. \sysname achieves 80\% throughput inceease for 0.5 mempool-to-capacity ratio and 130\% throughput increase for the ratio equal to 5.

\begin{figure}[h]
  \centering
  \includegraphics[width=\linewidth]{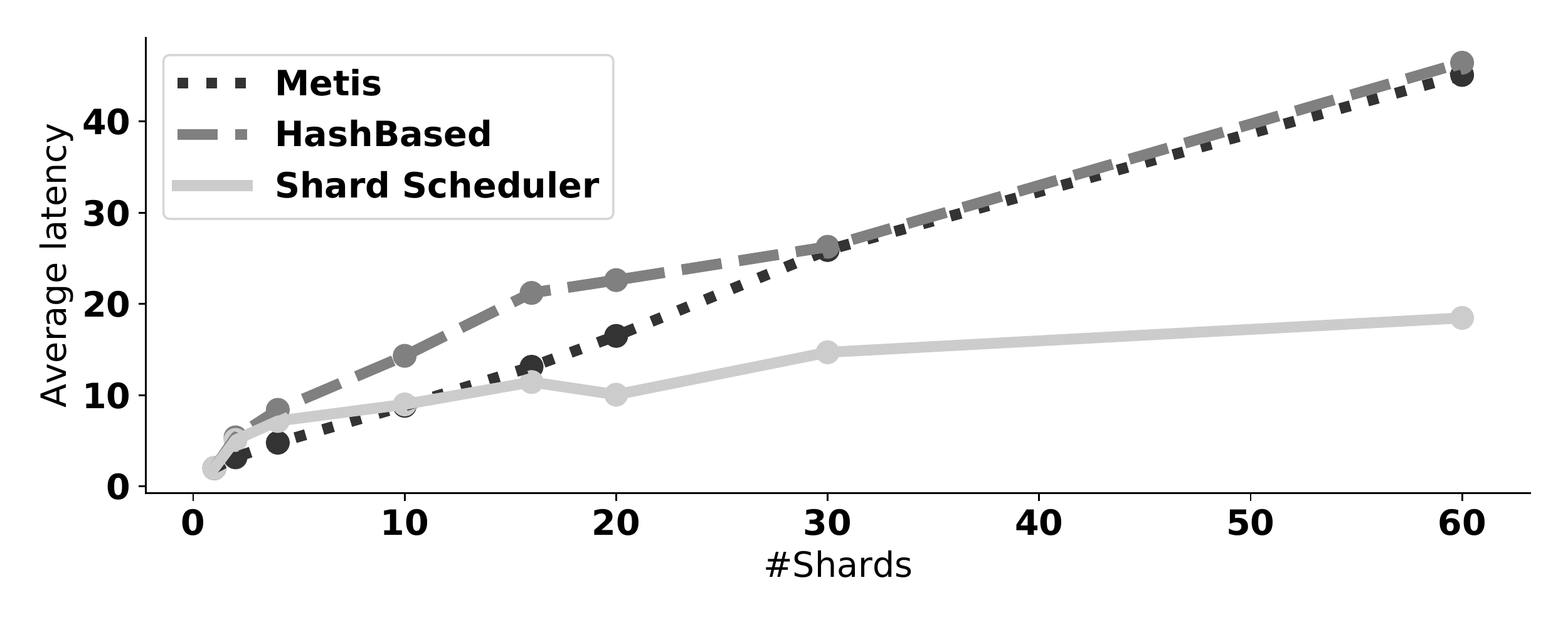}  
  \vspace{-25pt}
  \caption{Average latency vs number of shards.}
  \vspace{-5pt}
  \label{fig:sub_num_shardsl}
\end{figure}

\para{Latency} In ~\Cref{fig:sub_num_shardsl}, we observe average processing latencies with increasing number of shards. \sysname higher throughput result in significantly lower latency (3.5 times lower than other policies for 60 shards). Surprisingly high latency of the metis policy is caused by unequal load allocation (as shown below).

\begin{figure}[h]
  \centering
  \includegraphics[width=\linewidth]{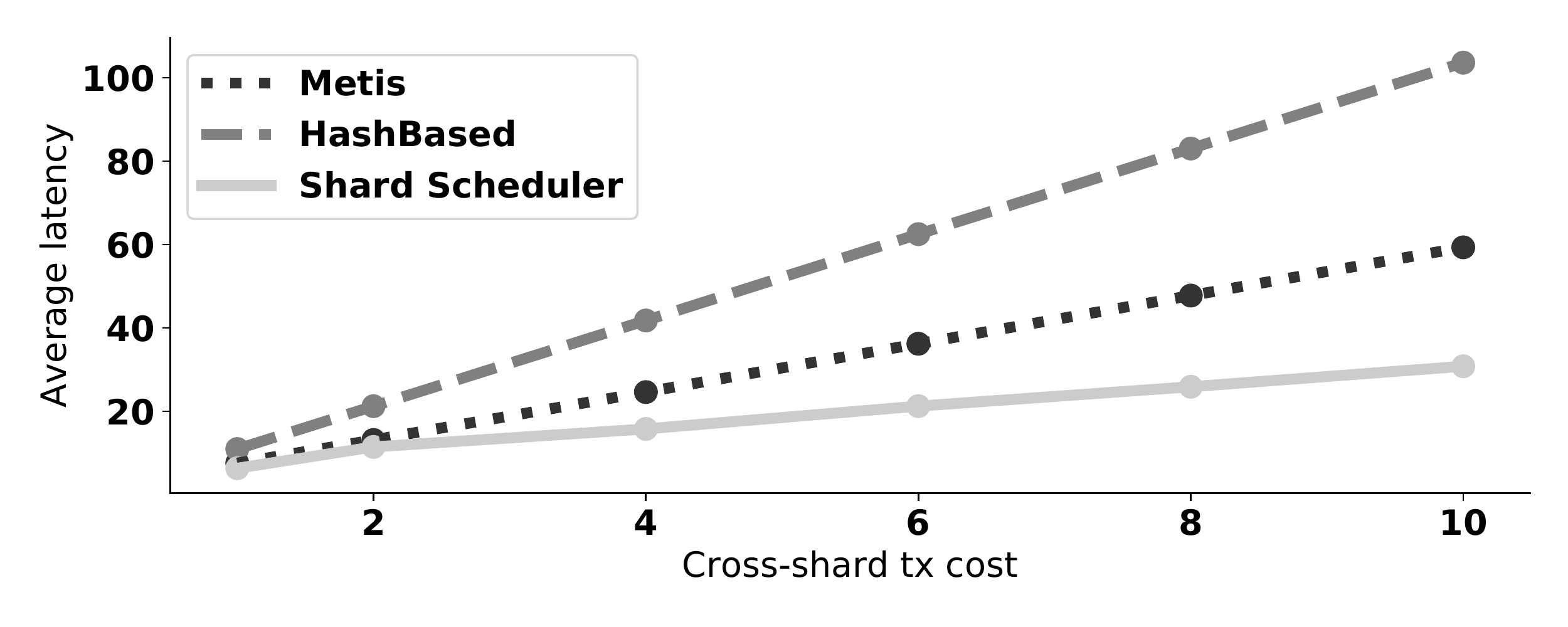} 
  \vspace{-25pt}
  \caption{Average latency vs cross-shard transaction cost}
  \vspace{-5pt}
  \label{fig:sub_cross_shard_costl}
\end{figure}
Increasing the cross-shard transaction cost (\Cref{fig:sub_cross_shard_costl}) increases the latency for all the policies. The metis policy preserves the account communities and performs better than the hash-base policy with the increasing cost of cross-shard interactions. However, \sysname achieves 2 times lower latency than metis and 3 times lower than hash-based policy for 10 cross-shard transaction cost.

\begin{figure}[h] 
  \centering
  \includegraphics[width=\linewidth]{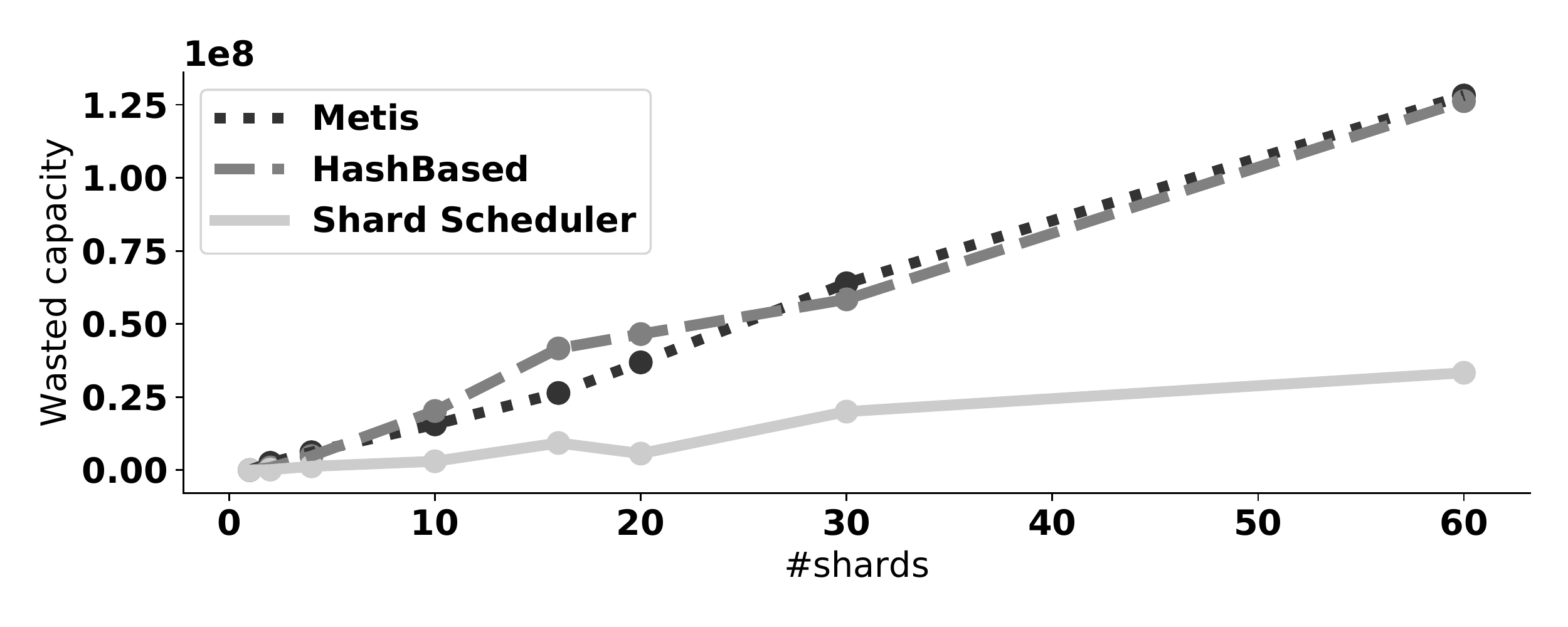}  
  \vspace{-25pt}
  \caption{Wasted capacity vs number of shards.}
  \vspace{-5pt}
  \label{fig:sub_num_shardsw}
\end{figure}

\para{Wasted Capacity} Both metis and hash-based policies achieve equal load spread across the shards in the long run. However, they fail to adapt to fine-grained activity changes due to the lack of migrations. \sysname takes per-transaction migration decision based on the previous load of all the shards and better utilizes the capacity of the blockchain. This effect is more pronounced as the number of shards (\Cref{fig:sub_num_shardsw}) or the cost of cross-shard transactions (\Cref{fig:sub_cross_shard_costw}) increases. More cross-shard interactions or their increased cost translates into more transactions waiting for one of the involved shards to become available.

\begin{figure}{}
  \centering
  \includegraphics[width=\linewidth]{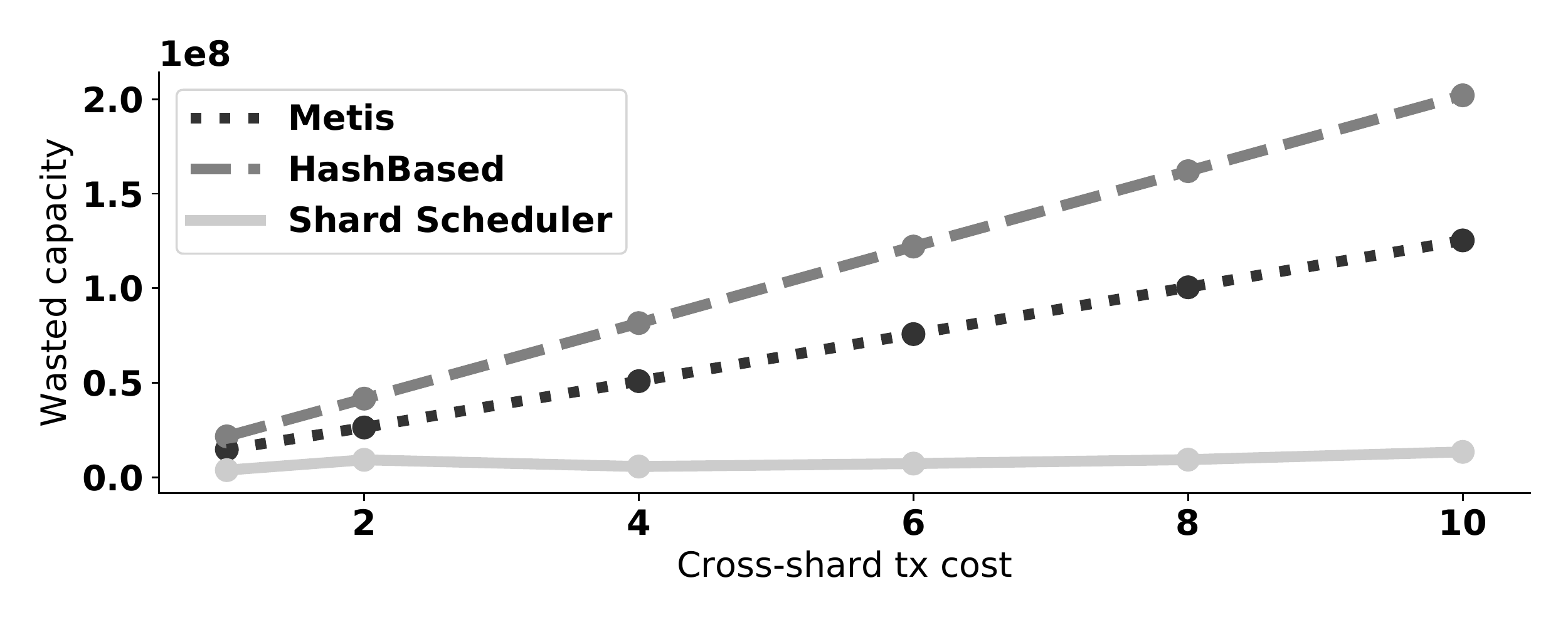}  
  \vspace{-25pt}
  \caption{Wasted capacity vs cross-shard transaction cost.}
  \vspace{-5pt}
  \label{fig:sub_cross_shard_costw}
\end{figure}

\begin{figure}[h]
\centering
  \includegraphics[width=\linewidth]{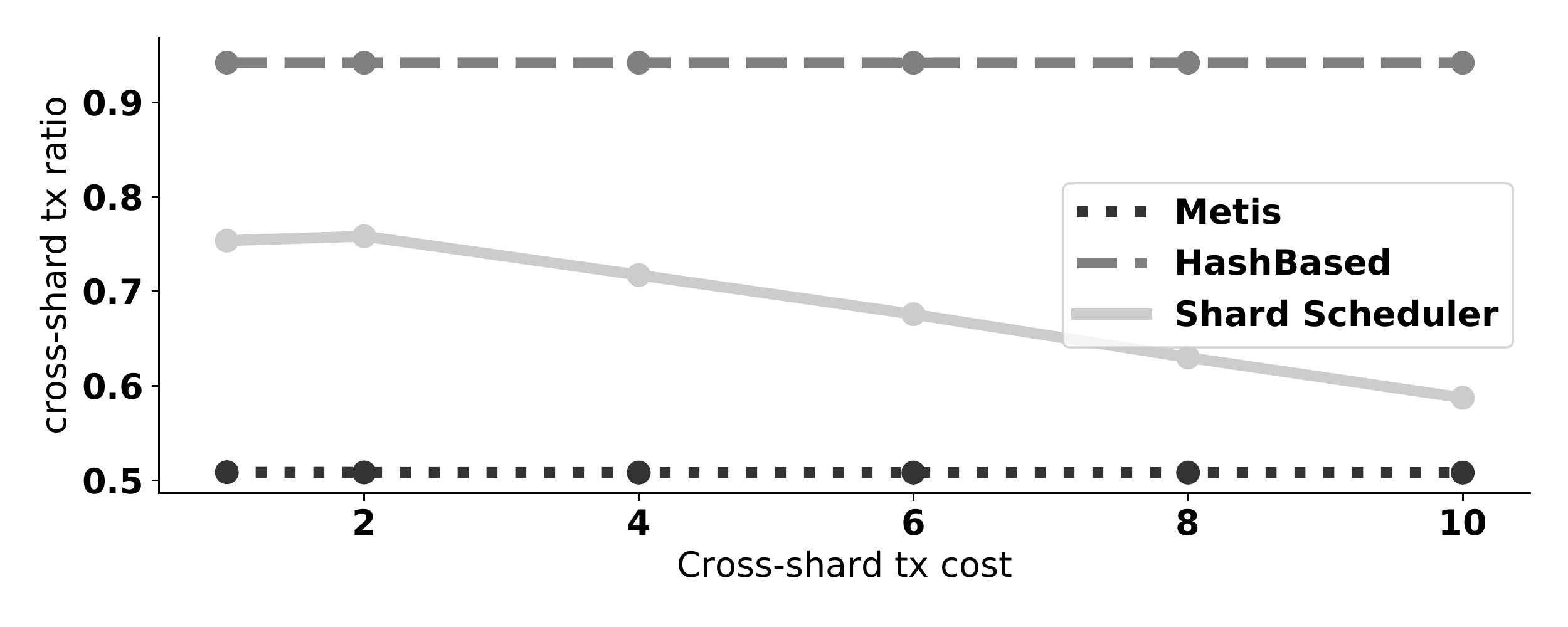} \vspace{-25pt}
  \caption{Cross-shard transaction ratio vs cross-shard transaction cost.}
  \vspace{-5pt}
  \label{fig:sub_shard_cap}
\end{figure}

\para{Cross-shard Transaction Ratio} Finally, we observe in \Cref{fig:sub_shard_cap} that \sysname is able to adapt gracefully to increasing inter-shard costs and reduce its number of cross-shard transaction ratio. This reduction is caused by the migration stopping condition (\Cref{alg:migration}) which takes the cross-shard transaction cost into account. On the other hand, both metis and hash policy are both oblivious to cross-shard transaction costs and their cross-shard ratios remain roughly constant failing to adapt to the changing environment. 

Overall, we observe that \sysname achieves significantly better performance despite scheduling additional cross-shard transactions linked to account migrations. The short-term migration overhead is largely compensated by long-term advantages of better load-balancing and preserving account communities.

\section{Prototype}\label{sec:prototype}
In this section we confirm the simulation results with real-world experiments. 

\para{Setup} We implement \sysname, metis and hash-based policied on top of Chainspace~\cite{chainspace} with security improvements proposed by Byzcuit~\cite{byzcuit}. Other shard environments are not yet finished~\cite{ethereum2}, do not provide the source code~\cite{wang2019monoxide, rapidchain} or does not fully partition the state~\cite{luu2016secure}. By default, Chainspace implements a UTXO data-model and does not implement blocks (\ie transactions are serialized as a continuous flow). We thus add blocks implementation and a data-model translation module that allows us to replay the history of Ethereum (\Cref{sec:data_extraction}) with an equivalent number of intra-shard and cross-shard transactions. We make the block implementation coherent with our model presented in \Cref{sec:model} and publish the code\footnote{Ommited for submission.}. We deploy 3 miners per shard on Amazon AWS within a single data centre and run tests for 5 and 10 shards. Due to high result variation within a single run, we repeat the tests 5 times and report the average values. 

We create 2 synthetic workloads of 1M transactions containing uniquely: \textit{(i)} intra-shard transactions and \textit{(ii)} cross-shard transactions. Both workload create perfectly balanced load across all shards. For the second workload, we observe 2 times lower throughput than for the first workload. We thus assume the cost of cross-shard transactions to be 2 for Chainspace and use it as a parameter to \sysname  (\Cref{sec:design}). 

\para{Results} We start by measuring the throughput of the system reported by Chainspace as the transaction per second (TPS) rate (\Cref{fig:chainspace_tps}). Surprisingly, we observe almost no throughput improvement for the hash-based policy when increasing the number of shards from $5$ (55TPS) to $10$ (56TPS). This is caused by highly unequal load across shards. For 10 shards, we observe multiple blocks filled to less than 50\% of their capacity. The metis policy provides much higher throughput (123TPS), but also suffer from unequal per-block load. The performance of the metis policy is expected to further drop down when increasing the size of the input file. \sysname is the only policy experiencing a significant throughput improvement and, for 10 shards, it triples the TPS rate of the hash based policy. However, the TPS rate when doubling the number of shards increases by only 23\%(from 120TPS to 148TPS). This is caused by migrations and cross-shard transactions that cannot be fully eliminated. Especially with complex smart contract transactions involving large numbers of accounts.

\begin{figure}
\centering
  \includegraphics[width=\linewidth]{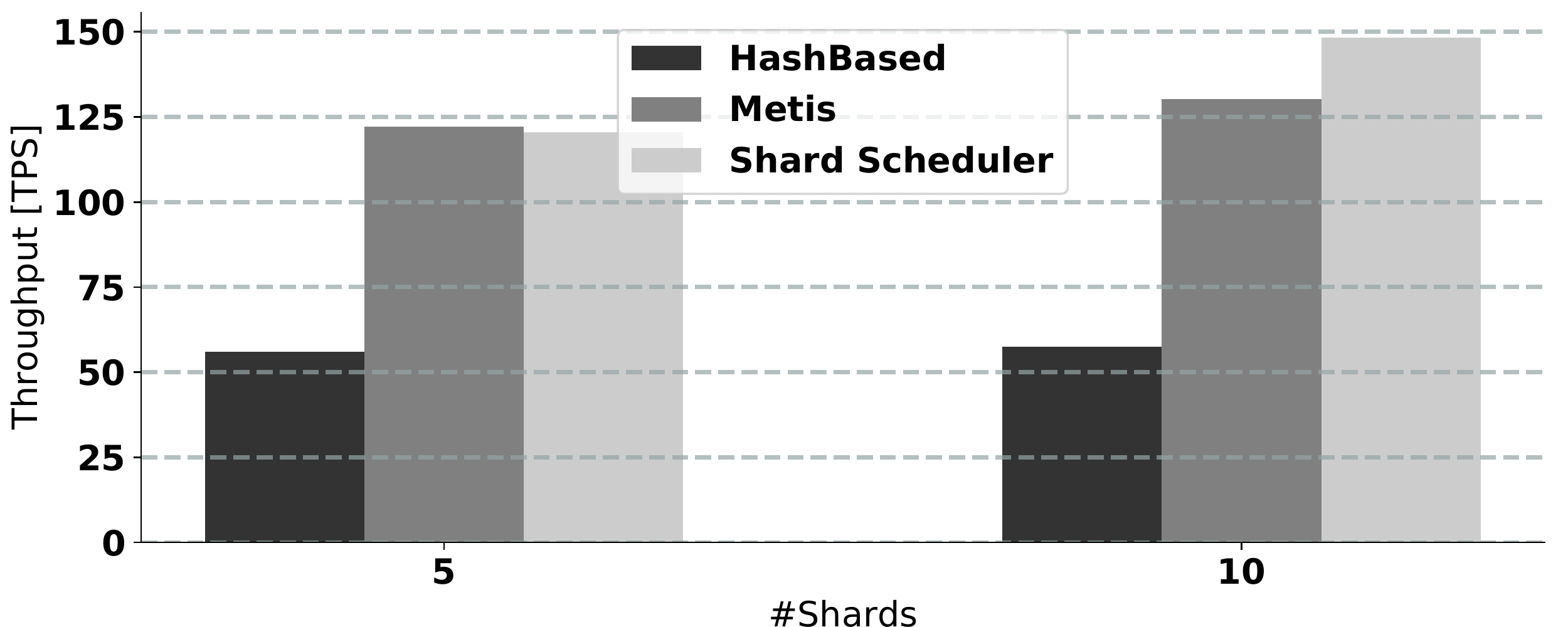}  
  \vspace{-15pt}
  \caption{Chainspace throughput.}
  \vspace{-5pt}
  \label{fig:chainspace_tps}
\end{figure}

We continue by investigating the transaction latency as perceived by the end-users (\Cref{fig:chainspace_latency}). Similarly to the simulations, the number of transactions submitted per block (\ie the mempoll) is proportional to the per-block capacity of the entire blockchain. Without the linear increase of the throughput, this approach causes increase of the user-perceived latency (as more blocks are need to fully process the mempoll). However, we observe the average latency achieved by \sysname to be significantly lower than for both the hash-based policy (49\% reduction for 10 shards) and the metis policy (31\% reduction for 10 shards). 

\begin{figure}
\centering
  \includegraphics[width=\linewidth]{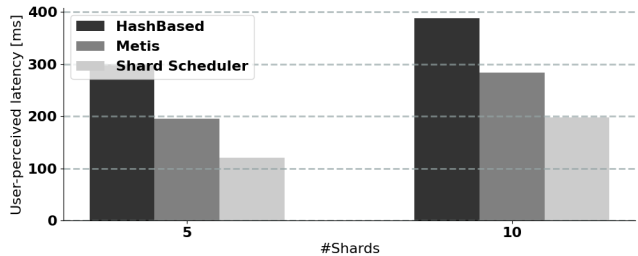}  
  \vspace{-15pt}
  \caption{Chainspace latency.}
  \vspace{-5pt}
  \label{fig:chainspace_latency}
\end{figure}



\section{Related Work}\label{sec:related}
We review related work on object migration and placement for sharded blockchain. We then briefly discuss object management techniques from the area of distributed systems. 

\para{Object migrations and allocation}
Optchain \cite{nguyen2019optchain} proposes an oracle for optimal transaction placement in sharded blockchains. 
The system uses graph clustering techniques and is implemented as an external service for the clients. 
However, Optchain approach works only for UTXO blockchains and cannot be easily adapted to the account-based data model. 
Han \textit{et al.}~\cite{hananalysing} study existing shard allocation protocols and propose WORMHOLE, a shard allocation protocol taking into account both self-balance and operability. 
However, the study focuses on allocating miners to shards, rather than objects residing on the blockchain. 
Fynn \textit{et al.}~\cite{fynn2018challenges} analyze the history of Ethereum transactions and investigate multiple graph clustering protocols in the context of account placement in sharding.
Similarly to our observations, they show that proactive placement without periodic migration does not achieve optimal performance. 
Fynn \textit{et al.}~\cite{fynn2020smart} develop techniques for moving smart contracts between shards and blockchains \textit{de facto} enabling contract migrations. 
The authors implements their protocol on Ethereum~\cite{wood2014ethereum} and Burrow~\cite{burrow}.

\para{Distributed systems}
In the area of the distributed systems multiple works investigated optimal object assignment and migrations. The proposed systems focus on two main aspects: \textit{(i)} developing a partitioning/migration plan (\ie object-to-partition allocation) and \textit{(ii)} efficient plan execution guaranteeing safety without causing significant downtime. 

E-store~\cite{taft2014store} provides an efficient solution based on tuples monitoring and bin backing problem to compute an optimal assignment of object to partition. However, the system does not take into account data locality. Clay~\cite{serafini2016clay} balances the number of inter-partition transactions, load balancing and limiting the number of migrations in order to maximize the throughput of the system. P-store~\cite{taft2018p} creates a partition plan taking into account load only. It contains a traffic prediction module~\cite{chen2008energy} that can proactively scale up or down the entire platform.

Squall~\cite{elmore2015squall} and Mgcrab~\cite{lin2019mgcrab} implement systems for efficient object partition and migration once given a partition plan. The platforms proposed for distributed systems provide important insights useful in our designs. However, they cannot be directly applied to sharded blockchains due to a different governance model. The majority of the platforms contain a non-deterministic element or cannot be verified by third parties~\cite{taft2018p}, introduce significant computational overhead~\cite{serafini2016clay, chen2008energy} or migrate large clusters of the objects at once~\cite{serafini2016clay}.



\section{Discussion and Conclusion}
\label{sec:conclusion}
\sysname provides objects migration and placement recommendations for account-based sharded blockchains. It provides a number of desirable properties and achieves the design goals specified in \Cref{sec:design_goals}.
First of all, \sysname improves the overall throughput of sharded blockchains. This is achieved through the mechanism explained in \Cref{sec:design} and its effectiveness is demonstrated by experiments (see Sections~\ref{sec:evaluation} and~\ref{sec:prototype}). 
Furthermore, \sysname recommendations are publicly verifiable as they are deterministic. Any third party can verify the correctness of objects migration and miners can apply the recommendations without needing an extra round of consensus. 
\sysname is lightweight in the sense that it does not require extra protocol messages, and does not introduce significant computation or memory overhead.
It integrates seamlessly into existing protocols requiring only minimal changes to the miners' software, and does not impact the way clients use the system.
\Cref{sec:economics} provides a novel incentive mechanism for sharded blockchain to financially motivate miners to maximize the total throughput of the system---miners collect higher fees by improving the overall performance of the system rather than by concentrating accounts in their own shard.

We leave a number of open questions that are deferred to future works. First of all, the objects placement recommendations of \sysname are efficient based on current and past typical usages of blockchains. There are no guarantees that this would be the case if blockchains are used in significantly different ways in the future. A learning agent may solve this issue by predicting future interactions between accounts, but it is not clear how to ensure that such agent remains both deterministic and lightweight. 
Secondly, handling transactions fees could become costly operations as they are associated with each transaction and may involve multiple shards. It would thus be desirable to remove fees handling from the critical path of the transaction's processing, or even offload them to a side-infrastructure. Recent works~\cite{fastpay}~\cite{astro} demonstrate that distributed payment systems can efficiently be implemented without consensus, and by quorum-based systems that can be natively integrated into one or more shards of a sharded blockchain.

\section*{Acknowledgements}
Alberto Sonnino is supported by Novi, a subsidiary of Facebook.
This work is partially supported by The Brussels Institute for Research and Innovation (Innoviris) under project FairBCaaS, and by the Belgian \emph{Fonds de la Recherche Scientifique} (FNRS) under Grant \#F452819F.

\bibliographystyle{plain}
\bibliography{references}

\end{document}